     \newcommand{\supp}{\operatorname{supp}}
     \newcommand{\R}{{\mathbb{R}}}
\newcommand{\e}{{\rm e}}
\newcommand{\ess}{{\rm ess}}
\renewcommand{\i}{{\rm i}}
\renewcommand{\d}{{\rm d}}
\renewcommand{\Re}{{\rm Re}\,}
\newcommand\inp[2][]{#1 \langle #2#1\rangle}
\newcommand\parb[2][]{#1 \big ( #2#1\big )}
\newcommand{\pp}{{\rm pp}}
\newcommand{\mand}{\text{ and }}
\newcommand{\mfor}{\text{ for }}
\newcommand{\vD}{{\mathcal D}}
\newcommand{\vH}{{\mathcal H}}
\newcommand {\tk}{{\theta_m}}
\newcommand {\tkp}{{\theta_m^\prime}}
\newcommand{\Tk}{{\Theta_m}}
\newcommand{\Tkp}{{\Theta_m^\prime}}
     \theoremstyle{plain}
     \newtheorem{thm}{Theorem}[section]
     \newtheorem{proposition}[thm]{Proposition}
     \newtheorem{lemma}[thm]{Lemma}
      \newtheorem{corollary}[thm]{Corollary}
     \theoremstyle{definition}
\newtheorem{examples}[thm]{Examples}
     \newtheorem{cond}[thm]{Condition}
     \newtheorem{remarks}[thm]{Remarks}
\newtheorem*{remarks*}{Remarks}
\newtheorem*{remark*}{Remark}
     \numberwithin{equation}{section}
\title[Absence of eigenvalues]{Absence of embedded eigenvalues for Riemannian Laplacians}
\thanks{
This work was essentially done during K.I.'s stay in Aarhus University (academic
year 2009-2010).
He would like to express his gratitude for financial support from  FNU 160377
(2009--2011) as
well as  from  JSPS Wakate (B) 21740090
(2009--2012). K.I. thanks H. Kumura for valuable discussion on this topic.}
\author{K. Ito}
\address[K. Ito]{Graduate School of Pure and Applied Sciences, University of Tsukuba\\
1-1-1 Tennodai, Tsukuba Ibaraki, 305-8571 Japan}
\email{ito-ken@math.tsukuba.ac.jp}
\author{E. Skibsted}
\address[E. Skibsted]{Institut for  Matematiske
Fag \\
Aarhus Universitet\\ Ny Munkegade  8000 Aarhus C,
Denmark}
\email{skibsted@imf.au.dk}
\begin{document}
\begin{abstract}
In this paper we study absence of embedded eigenvalues
 for Schr\"odinger operators on non-compact 
connected  Riemannian manifolds.  A   principal example is given by a manifold with an 
  end  (possibly more than one) in
which   geodesic coordinates are naturally defined. In this case one of our
geometric  conditions  is  a  positive lower bound of the second fundamental form of
angular submanifolds at infinity inside  the end. Another condition may be viewed 
   (at least in a special case) as  being a bound  of the trace of
this quantity, while similarly, a third one as being a bound of the
derivative of this trace. In addition to  geometric bounds we need
conditions on the potential, a regularity
property of the domain of the Schr\"odinger operator and the unique
continuation property. Examples include ends endowed with asymptotic Euclidean or hyperbolic
metrics studied  previously in the literature.
\end{abstract}
\maketitle
\tableofcontents

\section{Introduction and results}\label{sec:introduction}
Let $(M,g)$ be a non-compact connected  
  Riemannian manifold of dimension $d\ge 1$  (possibly
 incomplete),
and $H$ the Schr\"odinger operator
on the Hilbert space ${\mathcal H}=L^2(M)$:
\begin{align*}
H=H_0+V;\quad H_0=-\tfrac12\triangle=\tfrac12p_i^*g^{ij}p_j,\quad p_i=-\mathrm{i}\partial_i.
\end{align*}
We  introduce four conditions  under which we  prove that a
self-adjoint realization of $H$ does
not have eigenvalues greater than some computable constant. For    the
Euclidean case the theory boils down to 
absence of positive eigenvalues which is  a  well studied
subject, see e.g.  \cite{RS,FHH2O,JK}. Our conditions appear  rather
weak and allow for application to  manifolds with boundary (possibly
caused by metric or potential singularities). In
particular, to our knowledge,  
they are  weaker than conditions used so far in  the literature  on
the subject,
cf.  e.g. \cite{Me, MZ, Do, Ku, Ku2}. The present work is applied in a
companion paper  \cite{IS}  in which scattering
theory  is studied for a  general class of metrics. Our conditions are
also weaker than the
 conditions of \cite{IS}.

The first condition we impose guarantees intuitively that 
$(M,g)$ has at least one ``expanding end''. 
\begin{cond}\label{cond:diffeo} 
There exists an unbounded real-valued function $r\in C^\infty(M)$,
$r(x)\geq 1$, such that uniformly in $x\in M$ (i.e. all limits below
are meant to be  uniform in $x\in M$):
\begin{enumerate}[\normalfont (1)]
\item The following inequality holds, 
\begin{align}
\limsup_{r\to\infty} {}|\mathrm{d} r|<\infty.\label{eq:3}
\end{align}
\item There exist  constants $c>0$, $\tilde c\in [c/2,c)$  and $r_0\geq 1$ such that 
\begin{align}
\nabla^2 r^2 \ge cg \mfor r\geq r_0,\label{eq:10.2.8.3.55}
\end{align}
and 
\begin{align}
\liminf_{r\to\infty}{}(r\partial^r|\mathrm{d} r|^2+\tilde c    |\mathrm{d} r|^2)>  0,\quad 
\lim_{r\to\infty}{}\partial^r|\mathrm{d} r|^2= 0,\label{eq:11.7.15.8.1}
\end{align}
where $\partial^r=\i p^r=\nabla r=\mathop{\mathrm{grad}} r$ denotes the gradient vector field for $r$, i.e.
\begin{align*}
\partial^rf=(\partial_i r)g^{ij}(\partial_j f),\quad f\in C^\infty(M).
\end{align*} 
\item \label{item:11.7.15.8.2}
There exists a decomposition $\triangle r^2= \rho_1+\rho_2$ such that 
\begin{align}
\lim_{r\to\infty}\rho_1=0,
\quad\limsup_{r\to\infty}r^{-1}|\rho_2|<\infty,
\quad
\limsup_{r\to\infty}{}|\mathrm{d}\rho_2|<\infty.
\label{eq:10.9.2.23.19}
\end{align}
\end{enumerate}
\end{cond}


Note that the subsets $\{x\in M|\ r(x)\le \tilde r\}$, $\tilde r\ge 1$,
may not be compact (this is similar  to  \cite{Ku, Ku2}, see Subsection
 \ref{sec:conditions-inside-an}). In particular the
function $r$ could model a distance
function within a fixed single  {\it end} of $M$ extended to be  bounded
outside, in particular bounded in 
other  ends  of   $M$. Also note that for an exact distance function \eqref{eq:3} and
 \eqref{eq:11.7.15.8.1} are  trivially fulfilled, and in that
 case the above operator $\partial^r$ is
 identified as the geodesic radial derivative $\partial_r$,  see Subsection
 \ref{sec:conditions-inside-an}.

\begin{cond}\label{cond:10.6.1.16.24}
There exists a decomposition $V=V_1+V_2$, $V_1\in L^2_{\rm loc}(M)$,  $V_2\in C^1(M)$ and $V_1,V_2$
                                real-valued, such that uniformly in $x\in M$: 

\begin{align}\label{eq:60}
\lim_{r\to\infty} rV_1=0,\quad \limsup_{r\to \infty} |V_2|<\infty ,\quad
\limsup_{r\to\infty}{}r \partial^r V_2\le 0.
\end{align}

\end{cond}

Note that under Condition \ref{cond:10.6.1.16.24} the subspace $C^\infty_{\mathrm{c}}(M)\subseteq  {\mathcal D}(V)$ and
whence that $H$ is defined at least on
$C^\infty_{\mathrm{c}}(M)$. However  under
Conditions~ \ref{cond:diffeo} and \ref{cond:10.6.1.16.24} this operator
 is  
 not necessarily essentially self-adjoint.
Note that $(M,g)$ is allowed to be incomplete and that $V$ is  allowed to
be unbounded. For instance 
$(M,g)$ could be the interior of a Riemannian  manifold     with  boundary  and for essentially self-adjointness we
would then need a symmetric boundary condition.
Lack of essential self-adjointness could also originate  from  
unboundedness of $V$ in some end. 
 To  fix a self-adjoint extension we first choose a non-negative $\chi\in C^\infty(\mathbb{R})$ with
\begin{align*}
\chi(r)=\left\{\begin{array}{ll}
0&\mbox{ for } r\le 1, \\
1 &\mbox{ for }r \ge 2,
\end{array}
\right.
\end{align*}
and then set 
\begin{align}
\chi_\nu(r)=\chi(r/\nu),\quad  \nu\ge 1.
\label{eq:11.7.11.5.14}
\end{align}
We shall henceforth consider the function $\chi_\nu$ as being  composed with the
function $r$ from Condition \ref{cond:diffeo}. In this sense
particularly   
$\chi_\nu\in C^\infty (M)$.
\begin{cond}\label{cond:11.7.11.1.25}
The operator $H$ defined on $C^\infty_{\mathrm{c}}(M)$ (by 
Condition \ref{cond:10.6.1.16.24}) has a self-adjoint extension,
denoted by $H$ again,
such that for any $\psi\in {\mathcal D}(H)$ there exists a sequence 
$\psi_n\in C^\infty_{\mathrm{c}}(M)$ such that for all large $\nu\ge 1$
\begin{align*}
\|\chi_\nu (\psi-\psi_n)\|+\|\chi_\nu (H\psi - H\psi_n)\|\to 0\quad \mbox{ as }n\to\infty.
\end{align*}
\end{cond}
Note that Condition \ref{cond:11.7.11.1.25} is fulfilled if  $(M,g)$ is
 complete and   $V$ is
bounded. In that case indeed $H$ is essentially
self-adjoint on
$C^\infty_{\mathrm{c}}(M)$, see Proposition
\ref{prop:global-conditionss} for a more general result.

As a global condition we impose   for this self-adjoint extension   \textit{the unique continuation   property}.
\begin{cond}\label{cond:11.7.9.0.24}
If $\phi\in {\mathcal D}(H)$ satisfies $H\phi=E\phi$, $E\in \mathbb{R}$, and $\phi(x)=0$ in some open subset, 
then $\phi(x)=0$ in $M$.
\end{cond}

In Section \ref{sec:11.7.9.0.25} we shall discuss various models satisfying 
Conditions~\ref{cond:diffeo}--\ref{cond:11.7.9.0.24}.
 We define a ``critical'' energy,
 \begin{equation}
   \label{eq:E_0}
   E_0=\limsup_{r\to\infty}\parb{V+\tfrac{|\d \rho_2|^2}{32(c-\tilde c)\tilde c}}.
 \end{equation} Note that the smallest possible value of $E_0$ under
 variation of $\tilde c$ in \eqref{eq:11.7.15.8.1} is attained at
 $\tilde c=c/2$. For   examples in Subsection
 \ref{sec:conditions-inside-an} (for which for simplicity $V=0$) we can use this $\tilde c$  and
 verify that
 the essential spectrum $\sigma_\ess (H_0)=[E_0, \infty)$, see Remark
 \ref{remark:end-warped-products} \ref{item:6}. Whence   
  for these examples indeed  $ E_0$   is
 critical regarding absence of eigenvalues as stated more generally in the following theorem.

\begin{thm}\label{thm:absence-eigenvalues-1}
Suppose Conditions~\ref{cond:diffeo}--\ref{cond:11.7.9.0.24}.
Then the eigenvalues  of  $H$ are  absent above $E_0$, i.e. 
$\sigma_{\mathrm{pp}}(H)\cap (E_0,\infty)=\emptyset$.
\end{thm}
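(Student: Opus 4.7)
The plan is to adapt the Froese--Herbst exponential bound strategy to the present Riemannian setting. Assume for contradiction that $\phi\in\vD(H)$ satisfies $H\phi=E\phi$ with $E>E_0$ and $\phi\ne 0$. I aim to show that $\phi$ decays faster than any exponential in $r$, at which point the unique continuation property in Condition \ref{cond:11.7.9.0.24} forces $\phi\equiv 0$, contradicting the assumption that it is an eigenfunction.

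The engine is a commutator/virial identity built from the radial vector field $\partial^r$ and the Mourre-type positivity \eqref{eq:10.2.8.3.55}. I would take a symmetric conjugate operator of the form $A=\tfrac{1}{2}\bigl(f(r)p^r+(p^r)^*f(r)\bigr)$ with a suitable positive symbol $f$, and compute $\i[H_0,A]$ and $\i[V,A]$ on $C^\infty_{\mathrm{c}}(M)$. By $\nabla^2 r^2\ge cg$ the principal part of $\i[H_0,A]$ is positive, of size $\sim c\,f\,|\d r|^2$, with lower-order pieces controlled by $\partial^r|\d r|^2$ from \eqref{eq:11.7.15.8.1} and by the splitting $\triangle r^2=\rho_1+\rho_2$ from \eqref{eq:10.9.2.23.19}; while $\i[V,A]$ is dominated by $f\,\partial^r V_2$, asymptotically non-positive by \eqref{eq:60}. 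Condition \ref{cond:11.7.11.1.25}, via the cutoffs $\chi_\nu$ and the approximants $\psi_n$, transfers every such identity from $C^\infty_{\mathrm{c}}(M)$ to $\phi$, with the limit $\nu\to\infty$ taken last.

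Next, apply the identity to the weighted eigenfunction $\phi_\alpha=\mathrm{e}^{\alpha F(r)}\chi_\nu\phi$, with the weight $F$ tuned so that $|\nabla F|$ is comparable to $|\d r|$, and define
\begin{align*}
\alpha_0=\sup\bigl\{\alpha\ge 0:\mathrm{e}^{\alpha F(r)}\phi\in L^2(M)\bigr\}.
\end{align*}
A standard Agmon-type estimate gives $\alpha_0>0$. The Froese--Herbst step then rules out $\alpha_0\in(0,\infty)$: the virial inequality applied to $\phi_\alpha$ for $\alpha\nearrow\alpha_0$, combined with \eqref{eq:60} and a Cauchy--Schwarz estimate pairing the slack $c-\tilde c$ coming from comparing \eqref{eq:10.2.8.3.55} and \eqref{eq:11.7.15.8.1} against $|\d\rho_2|$, yields an asymptotic inequality of the form
\begin{align*}
E\ \ge\ E_0+\kappa\,\alpha_0^2
\end{align*}
for some $\kappa>0$; the Cauchy--Schwarz constant is arranged so as to reproduce exactly the coefficient $|\d\rho_2|^2/(32(c-\tilde c)\tilde c)$ appearing in \eqref{eq:E_0}. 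Since $E>E_0$ is fixed, the same argument carried out with $\alpha$ slightly past $\alpha_0$ keeps the weighted $L^2$ norm finite, contradicting the definition of $\alpha_0$. Hence $\alpha_0=\infty$, $\phi$ has super-exponential decay, and Condition \ref{cond:11.7.9.0.24} forces $\phi\equiv 0$.

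The main obstacle is, I expect, the rigorous execution of these commutator manipulations on $\vD(H)$ in the absence of completeness of $(M,g)$ and of boundedness of $V$. Condition \ref{cond:11.7.11.1.25} is designed precisely for this: all quadratic form identities must be derived first on $C^\infty_{\mathrm{c}}(M)$, localized by $\chi_\nu^2$, and only passed to $\phi$ through the approximants $\psi_n$, the limit $\nu\to\infty$ being taken at the very end. A subtler technical point is the precise choice of the symbols $F$ and $f$ so that the Cauchy--Schwarz balancing produces \emph{exactly} the coefficient $1/\bigl(32(c-\tilde c)\tilde c\bigr)$, making the critical threshold $E_0$ in \eqref{eq:E_0} optimal at $\tilde c=c/2$.
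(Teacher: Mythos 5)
Your overall strategy is the same as the paper's: a Froese--Herbst style iteration (exponential weights $\mathrm{e}^{\alpha F(r)}$, a symmetric conjugate operator built from $p^r$, a Mourre-type positivity from $\nabla^2 r^2\geq cg$, and the cutoffs $\chi_\nu$ together with Condition~\ref{cond:11.7.11.1.25} to justify the commutator identities on $\vD(H)$). The paper uses the specific choices $A=\i[H_0,r^2]=rp^r+(p^r)^*r$ and $F(r)=r$ (with a further regularization $\Theta_m=\sigma r+\delta\theta_m$), but your more general $f(r)$ and $F(r)$ are harmless variations of the same idea.

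However, there is a genuine gap at the very end. You conclude from super-exponential decay (\emph{i.e.}\ $\mathrm{e}^{\sigma r}\phi\in\vH$ for all $\sigma\ge 0$) directly to ``Condition~\ref{cond:11.7.9.0.24} forces $\phi\equiv 0$.'' That step does not follow: unique continuation requires $\phi$ to vanish on an open set, and super-exponential decay alone does not produce such a set (for example $\mathrm{e}^{-\mathrm{e}^r}$ decays super-exponentially and is nowhere zero). The paper therefore splits the argument into two distinct pieces. First, Proposition~\ref{prop:absence-eigenvalues-1b} establishes $\mathrm{e}^{\sigma r}\phi\in\vH$ for all $\sigma$, via the regularized weights $\Theta_m$ and a bootstrapping of the supremum $\sigma_0$ (this is the part your proposal describes). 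Second, and separately, the proof of Theorem~\ref{thm:absence-eigenvalues-1} runs the Mourre-type virial identity once more, now applied to $\phi_\sigma=\chi_\nu\mathrm{e}^{\sigma(r-4\nu)}\phi$ with $\sigma\to\infty$, and obtains \eqref{eq:18}: a lower bound on the left-hand side that contains a $\sigma^2\|\phi_\sigma\|^2$-type term with a strictly positive coefficient on $\supp\chi_\nu$ (guaranteed by \eqref{eq:2'}), versus a right-hand side that is merely $O(\langle\sigma\rangle^2\|\phi\|^2)$. If $\chi_{5\nu}\phi\not\equiv 0$, then $\|\phi_\sigma\|$ grows exponentially in $\sigma$, and dividing by $\langle\sigma\rangle^2$ the two sides are incompatible as $\sigma\to\infty$. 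This forces $\chi_{5\nu}\phi\equiv 0$, which is the open set needed before invoking unique continuation. Your proposal is missing this second application of the virial argument; without it the unique continuation step cannot be launched.

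A smaller remark: your inequality ``$E\ge E_0+\kappa\alpha_0^2$'' would, if taken literally, yield an \emph{upper} bound on $\alpha_0$ rather than show $\alpha_0=\infty$; the actual mechanism (in the paper's Proposition~\ref{prop:absence-eigenvalues-1b}) is that the estimate is uniform in the regularization parameter $m$, so whenever $\sigma<\sigma_0$ one can upgrade to $\mathrm{e}^{(\sigma+\delta)r}\phi\in\vH$ for a small $\delta>0$ independent of $\sigma$, contradicting finiteness of $\sigma_0$.
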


 Various of our conditions are optimal for exclusion of embedded
 eigenvalues. It is well known in  Schr\"odinger operator theory that the  von Neumann Wigner potential, see for example
 \cite{FH} or \cite[Section XIII.3]{RS}, provides an example of a
 positive eigenvalue for  a decaying potential $O(r^{-1})$, $r=|x|$. Whence  the conclusion of Theorem
 \ref{thm:absence-eigenvalues-1} is in general false if the first
 condition of \eqref{eq:60} is relaxed as $\limsup_{r\to
   \infty}r|V_1|<\infty$. An example of a 
 Laplace-Beltrami operator having an embedded eigenvalue is 
 constructed in \cite{Ku}. This is  for a hyperbolic metric, and the
 example shows similarly that  the conclusion of Theorem
 \ref{thm:absence-eigenvalues-1}  in general is false if the first
 condition of \eqref{eq:10.9.2.23.19} is relaxed as $\limsup_{r\to
   \infty}|\rho_1|<\infty$. (Actually Kumura uses the  von Neumann
 Wigner potential in his construction.) 

The proof of Theorem~\ref{thm:absence-eigenvalues-1} 
follows the scheme of  \cite{FHH2O,FH,DG,MS} 
employing  in particular a Mourre-type commutator estimate and exponential decay estimates of a priori
   eigenstates. In our geometric setting  the ``Mourre commutator'' can be very singular
   (in particular  not bounded relatively to $H$ in any usual sense).
   Consequently we only have a weak (however sufficient)  version of the  commutator estimate, see
   Corollary \ref{cor:10.10.13.15.00}.

\smallskip

We use throughout the paper the standard notation
$\inp{\sigma}=(1+|\sigma|^2)^{1/2}$ and (as above) $\d$ for exterior
differentiation (acting on functions on $M$). Note that in local
coordinates $p:=-\i \d$ takes the form $p=(p_1,\dots, p_d)$. We shall
slightly abuse notation writing for example $p\psi\in {\mathcal H}=L^2(M)$
for $\psi\in C^\infty_{\mathrm{c}}(M)$ even though the correct meaning 
here  is a section of the (complexified) cotangent bundle, i.e. $p\psi\in
\Gamma(T^*M)$. Note at this point that
$\|p\psi\|:=\|p\psi\|_{\Gamma(T^*M)}=\|\,|p\psi|\,\|_{\mathcal H}$. If
$A$ is an operator on $\vH$ and $\psi\in \vD(A)$ we denote the
expectation $\inp {\psi,A\psi}$ by $\inp {A}_\psi$. Unimportant positive constants are denoted by $C$,
in particular $C$ may vary from occurrence to occurrence.
The dependence on other variables is  sometimes indicated by
subscripts  such as $C_\nu$.

\section{Discussion and examples}\label{sec:11.7.9.0.25}

In this section we investigate how general our conditions are by looking at several examples.

\subsection{Global conditions}\label{sec:global-conditions-1}
We recall some general criteria for self-adjointness and
the unique continuation property.

\begin{proposition}\label{prop:global-conditionss}
Let $(M,g)$ be a complete Riemannian manifold of dimension $d\geq 1$.
Then the free Schr\"odinger operator $H_0$ is essentially self-adjoint
on $C^\infty_{\mathrm{c}}(M)$. Suppose  $V$ is real-valued,
measurable, bounded outside  a compact
set and in addition: $V\in L^2_{\rm loc }(M)$ for $d=1,2,3$,  $V\in L^p_{\rm
       loc }(M)$ for some $p>2$ if  $d=4$ while $V\in
     L^{d/2}_{\rm loc }(M)$ for $d\geq 5$. Then $V$ is relatively
     compact. In particular $H$ is essentially self-adjoint on
     $C^\infty_{\mathrm{c}}(M)$.
\end{proposition}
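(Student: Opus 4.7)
My plan is to treat the two assertions separately: first essential self-adjointness of $H_0$ on a complete manifold, then relative compactness of the singular part of $V$, and finally combine via Kato--Rellich to conclude essential self-adjointness of $H$.

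For $H_0$ on $(M,g)$ complete, I would appeal to the classical theorem of Chernoff (or the Gaffney--Strichartz variant): on a complete Riemannian manifold, any formally self-adjoint elliptic differential operator whose associated Cauchy problem has finite propagation speed is essentially self-adjoint on $C^\infty_{\mathrm{c}}(M)$. The key consequence of completeness here is the existence of a sequence of smooth cutoffs $\chi_n\in C^\infty_{\mathrm{c}}(M)$ with $\chi_n\uparrow 1$ pointwise and $|\mathrm{d}\chi_n|\to 0$ uniformly; integration by parts against any $\phi\in \Ker(H_0^*\mp\i)$ then forces $\phi=0$. Since this is a well-known black box, I would quote it rather than reproduce the argument.

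For $V$, pick a compact $K\subset M$ outside of which $V$ is bounded, and write $V = V_\sharp+V_\flat$, where $V_\sharp = \mathbf 1_K V$ is compactly supported with the local $L^p$-integrability prescribed in the hypothesis, and $V_\flat = V-V_\sharp$ is globally bounded. The piece $V_\flat$ is a bounded self-adjoint operator and thus $H_0$-bounded with relative bound $0$; the content lies in establishing that $V_\sharp$ is $H_0$-compact, since $H_0$-compactness implies $H_0$-boundedness with relative bound $0$.

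For the compactness I would localize using a cutoff $\zeta\in C^\infty_{\mathrm{c}}(M)$ equal to $1$ on a neighbourhood of $K$ and with support in a relatively compact chart. Interior elliptic regularity applied to $H_0 = \tfrac12 p_i^*g^{ij}p_j$ gives $\zeta\psi\in H^2(M)$ whenever $\psi\in\vD(H_0)$, with $\|\zeta\psi\|_{H^2}\le C(\|H_0\psi\|+\|\psi\|)$. Local Sobolev embedding then yields $H^2(\supp\zeta)\hookrightarrow L^q(\supp\zeta)$ with $q=\infty$ if $d\le 3$, any finite $q$ if $d=4$, and $q = 2d/(d-4)$ if $d\ge 5$; the dimensional choice of $p$ in the hypothesis is precisely that for which H\"older's inequality delivers $\|V_\sharp\psi\|_{L^2}\le C\|\psi\|_{L^q(\supp\zeta)}\|V_\sharp\|_{L^p}$. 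Rellich--Kondrachov makes the embedding $H^2\hookrightarrow L^q$ compact on the chart, upgrading the estimate to compactness of multiplication by $V_\sharp$ as an operator $\vD(H_0)\to\vH$.

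Putting the pieces together, $V=V_\sharp+V_\flat$ is $H_0$-bounded with relative bound $0$, and Kato--Rellich implies that $H = H_0+V$ is self-adjoint on $\vD(H_0)$ and essentially self-adjoint on any core of $H_0$, in particular on $C^\infty_{\mathrm{c}}(M)$. The point I expect to require the most care is the dimensional matching in the Sobolev step: $d=4$ is the critical case that forces the assumption $p>2$ rather than $p=2$, and for $d\ge 5$ the H\"older exponent must be exactly $d/2$ in order to pair with the critical Sobolev exponent $2d/(d-4)$.
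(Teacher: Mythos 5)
The paper does not actually give a proof of this proposition: it simply refers to Chernoff \cite{Ch} for the essential self-adjointness of $H_0$ on a complete manifold, and to \cite[Theorems X.20 and X.21]{RS} for handling the potential, with a passing remark that the class can be enlarged to the Stummel class via \cite{DoGa}. Your write-up is therefore not ``a different route'' so much as a fleshing out of the standard argument hiding behind those citations: Chernoff for $H_0$, a localization plus elliptic regularity plus Sobolev/H\"older for $V$, and Kato--Rellich to conclude. That is the right skeleton, and your dimensional bookkeeping ($q=\infty$ for $d\le3$, any finite $q$ for $d=4$, $q=2d/(d-4)$ for $d\ge5$, matched by $p=2$, $p>2$, $p=d/2$ respectively) is correct.

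One technical point deserves care and, as stated, is a small gap. For $d\ge 5$ the Sobolev embedding you invoke, $H^2(\supp\zeta)\hookrightarrow L^{2d/(d-4)}(\supp\zeta)$, is at the \emph{critical} exponent and is \emph{not} compact; Rellich--Kondrachov only gives compactness strictly below the critical exponent. So ``Rellich--Kondrachov makes the embedding $H^2\hookrightarrow L^q$ compact'' fails precisely for the $q$ you need when $d\ge5$. The standard repair is to split once more: approximate $V_\sharp$ in $L^{d/2}(K)$ by bounded, compactly supported $W_\epsilon$ with $\|V_\sharp-W_\epsilon\|_{L^{d/2}}<\epsilon$. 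Multiplication by $W_\epsilon$ composed with the genuinely compact embedding $H^2(\supp\zeta)\hookrightarrow L^2(\supp\zeta)$ is compact from $\vD(H_0)$ to $\vH$, while your H\"older--Sobolev bound gives $\|(V_\sharp-W_\epsilon)\psi\|\le C\epsilon(\|H_0\psi\|+\|\psi\|)$. Hence $V_\sharp$ is an operator-norm limit of $H_0$-compact operators and is therefore $H_0$-compact. With that amendment the rest of your argument (bounded tail $V_\flat$, relative bound zero, Kato--Rellich, $C^\infty_{\mathrm c}(M)$ as a core of $H_0$) goes through and matches the content of the cited theorems.
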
\label{prop:global-conditions2}
We refer to  \cite{Ch}
and \cite[Theorems X.20 and X.21]{RS}. We can generalize the class of
potentials to the Stummel class, see  e.g.  \cite{DoGa}. 

As for the unique continuation property, Condition
\ref{cond:11.7.9.0.24}, there is an extensive
literature although mostly for Schr\"odinger operator theory, see e.g. \cite{JK}. 
For general connected manifolds
we refer to \cite{Wu} and references therein, quoting here the following sufficient
conditions supplementing connectivity and the conditions in Proposition
\ref{prop:global-conditionss}: 1) $d= 2,3,4$ and  $V$ is globally bounded, or  2) $d\geq
5$.  One could
(of course) add 3)  $d=1$. 

\subsection{Conditions inside an end}\label{sec:conditions-inside-an}
In the sequel we consider a connected and
complete $(M,g)$  of dimension $d\geq 2$ and take (for simplicity) $V=0$. We shall
examine the meaning of Condition \ref{cond:diffeo} in the case where,
in addition, 
$(M,g)$ has the following explicit {\it end}  structure:
There exists an open subset $E\subset M$ such that isometrically the
closure 
$\bar E\cong [0,\infty) \times S$ for some $(d-1)$-dimensional 
manifold $S$,  and that
\begin{align}
g=\mathrm{d}r\otimes \mathrm{d}r+g_{\alpha\beta}(r,\sigma)\,\mathrm{d}\sigma^\alpha\otimes\mathrm{d}\sigma^\beta;
\quad g_{rr}=1,\ g_{r\alpha}=g_{\alpha r}=0,
\label{eq:11.7.25.3.20}
\end{align}
where $(r,\sigma)\in [0,\infty) \times S$ denotes local coordinates and the Greek indices run over $2,\dots,d$.
Whence actually $r$ is globally defined in $E$ and it is a smooth
distance function (here given as the  distance to $\{0\}\times S$).  In particular we have $|\mathrm{d} r|=1$ which
obviously implies (\ref{eq:3}) and 
(\ref{eq:11.7.15.8.1}). Notice here that Condition \ref{cond:diffeo}
involves only the part of the function $r$  at large values, so in
agreement with Condition \ref{cond:diffeo} we can cut and extend it to a
smooth function on $M$ obeying $r\geq 1$. This is tacitly understood
below. To examine the remaining statements
\eqref{eq:10.2.8.3.55} and \eqref{eq:10.9.2.23.19} of Condition
\ref{cond:diffeo} we  compute
\begin{subequations}
\begin{align}
\nabla^2r^2&=2\,\mathrm{d}r\otimes \mathrm{d}r
+r(\partial_r g_{\alpha\beta})\,\mathrm{d}\sigma^\alpha\otimes \mathrm{d}\sigma^\beta,
\label{eq:11.8.5.22.23}\\
\triangle r^2&=g^{ij}(\nabla^2r^2)_{ij}=2+rg^{\alpha\beta}(\partial_rg_{\alpha\beta}).\label{eq:2verif}
\end{align}
  \end{subequations}

\subsubsection{End of warped product type}
If we consider the \textit{warped product} case where 
$g_{\alpha\beta}(r,\sigma)=f(r)h_{\alpha\beta}(\sigma)$
we obtain, using \eqref{eq:11.8.5.22.23} and \eqref{eq:2verif}, the following examples fulfilling  also
\eqref{eq:10.2.8.3.55} and \eqref{eq:10.9.2.23.19}  of Condition~\ref{cond:diffeo}.
\begin{examples}\label{example:8.2.13.11}
\begin{enumerate}[(1)]
\item \label{item:1} Let $f=r^{2p}$ with $p>0$. Then
  \eqref{eq:10.2.8.3.55} and \eqref{eq:10.9.2.23.19}  hold
  with $c=\min\{2,2p\}$ and $\rho_1=0$ respectively, and  the critical energy $E_0=0$.
\item \label{item:2}Let  $f=\mathop{\mathrm{exp}} (\kappa r^{q})$ with
  $\kappa >0$
  and $q\in (0,1)$.  Then \eqref{eq:10.2.8.3.55} and
  \eqref{eq:10.9.2.23.19} hold
  with $c=2$ and $\rho_1=0$ respectively, and $E_0=0$.
\item \label{item:5}Let  $f=\mathop{\mathrm{exp}} (2\kappa r)$ with
  $\kappa >0$.
  Then \eqref{eq:10.2.8.3.55} and \eqref{eq:10.9.2.23.19} hold
  with $c=2$ and $\rho_1=0$ respectively, and $E_0=\kappa^2(d-1)^2/8$.
\end{enumerate}
\end{examples}
\begin{remarks} \label{remark:end-warped-products} 
  \begin{enumerate}[1)]
  \item \label{item:6}
 For all of these examples  it is easy to compute that
  the essential spectrum $\sigma_\ess (H)\supseteq [E_0,
  \infty)$. If in addition $M\setminus E$ and $S$ are  compact
then we have  $\sigma_\ess (H)= [E_0,
  \infty)$. Whence
  indeed  the absence of eigenvalues in $(E_0, \infty)$ as stated in
  Theorem \ref{thm:absence-eigenvalues-1} is optimal under these
  additional conditions for the above   
  examples (except possibly that the threshold energy $E=E_0$ in a
  concrete situation might not
  be an eigenvalue neither).
\item \label{item:7} A metric obtained by taking $p=1$ in \ref{item:1} 
  (and assuming also $M\setminus E$ and
  $S$ compact), and possibly perturb it,  is  dubbed a ``scattering
  metric'' in \cite{Me, MZ}. 
  As shown by Melrose  
  absence of positive eigenvalues holds for scattering
  metrics. Since it is not required in
  Condition~\ref{cond:diffeo} that $r$ is an exact distance function
  we may  still have this condition fulfilled in perturbed
  situations (letting $r$ be the  unperturbed distance function). In
  this spirit Donnelly  
  \cite{Do}  studied perturbations of the
  Euclidean metric (corresponding to $p=1$ in \ref{item:1}) using a
  certain function of this type (i.e. not an exact distance function), and he proved
  absence of positive eigenvalues for such model. More generally, but
  roughly still in the framework of perturbations of \ref{item:1},
  absence of embedded  eigenvalues  was obtained in \cite{Ku2}, and
  for hyperbolic models (roughly for perturbations of \ref{item:5}) it was done
  in \cite{Ku}.  However Kumura's results are stated in terms of an exact
  distance function and parts of his results involve conditions on the
  radial curvature. Whence his framework is seemingly somewhat
  different. It turns out, however, that his conditions imply properties that are stronger
  than 
  our conditions. We will discuss an example of this point in Corollary
  \ref{cor:11.8.5.18.18} and Remark \ref{rem:volume-growth-curv}
  \ref{item:10}.
\item \label{item:8} Under the condition of warped product metrics    growth rates between 
$f=r^{2p}$ with $p>1/2$ and $f=\mathop{\mathrm{exp}} (\kappa r^{q})$ with
$\kappa>0$ and $q\in (0,1/2)$ 
define  a  class of metrics for which the scattering
theory  \cite{IS} applies. More generally 
 Conditions \ref{cond:diffeo}--\ref{cond:11.7.9.0.24} are weaker than the
 conditions used in  \cite{IS}.
 \end{enumerate}
\end{remarks}

\subsubsection{Volume growth and curvature}
Here let us relate the critical energy $E_0$ to geometric quantities.
We continue to assume (\ref{eq:11.7.25.3.20}) in the end $E$ although without warped product structure.
In the coordinates $(r,\sigma)\in [0,\infty) \times S$ used in  (\ref{eq:11.7.25.3.20}) we have 
\begin{align*}
\triangle r^2=2+2r\triangle r,\quad \triangle r=\partial_r \ln \sqrt{\det g},
\end{align*}
so that we can measure  the volume growth  in the radial direction in
terms of the function $\triangle r$.
By (\ref{eq:11.8.5.22.23}) the inequality (\ref{eq:10.2.8.3.55}),
necessarily with $c\leq 2$,  is 
equivalent to
\begin{align}
(r\partial_rg_{\alpha\beta}
-cg_{\alpha\beta})_{\alpha,\beta}\ge 0 \mfor r\geq r_0.
\label{eq:11.8.5.22.45}
\end{align}  In particular the induced metric on the angular manifold 
$S_{\tilde r}=\{x\in \bar E|\, r=\tilde r\}$  grows as a function of
$\tilde r$.
By taking the trace of (\ref{eq:11.8.5.22.45})  assuming here and henceforth $c=2$ and $\tilde c=1$ in
(\ref{eq:10.2.8.3.55}) and \eqref{eq:11.7.15.8.1}, respectively,  we 
obtain
\begin{align*}
r\triangle r\ge (d-1)\mfor r\geq r_0.
\end{align*}
Consider the special case of ``asymptotic   volume growth rate''
\begin{align}
\triangle r=\rho_++o(\tfrac{1}{r});\; \rho_+> 0.\label{eq:11.8.5.11.55}
\end{align}
Then, setting $\rho_2=2+2r\rho_+$ and $\rho_1=\triangle
r^2-\rho_2=o(1)$ in \eqref{eq:10.9.2.23.19}, 
we can write 
$E_0$ in terms of the volume growth rate
\begin{align}\label{eq:11.8.5.23.27}
E_0=\rho_+^2/8.
\end{align}
Next, noting that the radial curvatures $R_{\mathrm{rad}}$ can control the
second fundamental form (by a standard comparison argument, see
e.g. \cite[Remark 1.13]{IS} for a reference) 
we recover  a result from  \cite{Ku} (here slightly extended). 
\begin{corollary}\label{cor:11.8.5.18.18}
Suppose   $(M,g)$  is connected and
complete having   an end  $E$ with metric of the form (\ref{eq:11.7.25.3.20}).
Suppose  there exists $\kappa>0$ such that the radial curvature $R_{\mathrm{rad}}$ satisfies 
\begin{align*}
R_{\mathrm{rad}}=-\parb{\kappa^2+o(\tfrac{1}{r})}g \mbox{ on }S_{r}\;(\text{uniformly
  in } x\in E),
\end{align*}
  and there exists $r_1\geq0$ such that
\begin{align*}
 R_{\mathrm{rad}}\leq 0  \text{ on } S_{\tilde r}\text { for all }
  \tilde r\geq r_1 \mand \nabla^2r\ge 0 \mbox{ on  }S_{r_1}.
\end{align*}
 Then $\sigma_{\mathrm{pp}}(H_0)\cap
 ({\kappa^2(d-1)^2}/{8},\infty)=\emptyset$. 
\end{corollary}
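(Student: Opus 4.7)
The plan is to verify Conditions \ref{cond:diffeo}--\ref{cond:11.7.9.0.24} for the distance function $r$ (extended smoothly to $M$ with $r\ge 1$) and then read off $E_0 = \kappa^2(d-1)^2/8$ from (\ref{eq:11.8.5.23.27}) via Theorem \ref{thm:absence-eigenvalues-1}. Since $V=0$, Condition \ref{cond:10.6.1.16.24} is trivial, and since $(M,g)$ is complete, Proposition \ref{prop:global-conditionss} yields essential self-adjointness of $H_0$ on $C^\infty_{\mathrm{c}}(M)$ and hence Condition \ref{cond:11.7.11.1.25}. Condition \ref{cond:11.7.9.0.24} is furnished by the unique continuation criteria recalled in Subsection \ref{sec:global-conditions-1}, using connectedness of $M$. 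The substantial content is thus the verification of Condition \ref{cond:diffeo} in the end $E$.

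Inside $E$ the function $r$ is an exact distance function, so $|\mathrm{d}r|\equiv 1$ and (\ref{eq:3}), (\ref{eq:11.7.15.8.1}) are automatic with $\tilde c=1$. For (\ref{eq:10.2.8.3.55}) and (\ref{eq:10.9.2.23.19}) I would use the standard Riccati comparison argument along unit-speed radial geodesics, as invoked in \cite[Remark 1.13]{IS}. Restricting $T:=\nabla^2 r$ to the angular directions, $T$ satisfies
\[
\partial_r T + T^2 = -R_{\mathrm{rad}}.
\]
The initial hypothesis $T\ge 0$ on $S_{r_1}$ combined with $R_{\mathrm{rad}}\le 0$ for $r\ge r_1$ propagates $T\ge 0$ to all $\tilde r\ge r_1$ by the monotone comparison of matrix Riccati flows against the trivial solution. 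Linearizing around the hyperbolic fixed point $\kappa I$, i.e.\ writing $T=\kappa\,g_\sigma + S$ with $g_\sigma := g-\mathrm{d}r\otimes\mathrm{d}r$, the refined hypothesis $R_{\mathrm{rad}}+\kappa^2 g = o(1/r)\,g$ reduces the equation for $S$ to
\[
\partial_r S + 2\kappa S + S^2 = o(1/r)\,g_\sigma,
\]
whose Gronwall-type analysis (exploiting the exponential contraction rate $2\kappa$) gives the uniform asymptotic $\nabla^2 r = \kappa g_\sigma + o(1/r)g_\sigma$ on $E$ as $r\to\infty$.

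From this expansion one has $\nabla^2 r^2 = 2\,\mathrm{d}r\otimes\mathrm{d}r + 2r\,\nabla^2 r \ge 2g$ for $r$ large, so (\ref{eq:10.2.8.3.55}) holds with $c=2$. Tracing gives $\triangle r = (d-1)\kappa + o(1/r)$, so $\triangle r^2 = 2+2(d-1)\kappa r + o(1)$; setting $\rho_2 := 2+2(d-1)\kappa r$ and $\rho_1 := \triangle r^2 - \rho_2 = o(1)$ verifies (\ref{eq:10.9.2.23.19}) since $|\mathrm{d}\rho_2|\equiv 2(d-1)\kappa$ and $|\rho_2|/r$ is bounded. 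With $c=2$ and $\tilde c=1$ formula (\ref{eq:E_0}) yields
\[
E_0 = \frac{(2(d-1)\kappa)^2}{32(c-\tilde c)\tilde c} = \frac{\kappa^2(d-1)^2}{8},
\]
and Theorem \ref{thm:absence-eigenvalues-1} delivers the claim. The main technical obstacle is the sharp $o(1/r)$ rate in the Hessian asymptotics; the qualitative nonnegativity propagation for $T$ is immediate from Riccati monotonicity, but the rate requires the linearization at the stable fixed point $\kappa I$, which in turn relies on the intermediate assumption $R_{\mathrm{rad}}\le 0$ for all $r\ge r_1$ to keep $T$ in the stable regime throughout.
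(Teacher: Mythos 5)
Your proof takes essentially the same route as the paper: verify Condition~\ref{cond:diffeo} via the uniform Hessian asymptotics $\nabla^2 r|_{S_r}=(\kappa+o(1/r))(g-\d r\otimes\d r)$, deduce $c=2$, $\rho_+=\kappa(d-1)$, and read off $E_0=\kappa^2(d-1)^2/8$ from~\eqref{eq:11.8.5.23.27} before invoking Theorem~\ref{thm:absence-eigenvalues-1}. The only difference is that the paper cites \cite[Proposition 2.2]{Ku} for the Hessian estimate, whereas you sketch its derivation via the matrix Riccati comparison and linearization at the stable fixed point $\kappa I$ --- which is indeed the content of Kumura's proposition.
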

\begin{proof}
We have, cf. \cite[Proposition 2.2]{Ku}, 
\begin{align}
\nabla^2 r_{|S_{r}}=(\kappa+o(\tfrac{1}{r}))(g-\mathrm{d}r\otimes\mathrm{d}r),
\label{eq:11.8.7.12.48}
\end{align}
and thus \eqref{eq:11.8.5.11.55} holds with $\rho_+=\kappa(d-1)$.
Indeed we have \eqref{eq:10.2.8.3.55} with $c=2$, and $E_0=\kappa^2(d-1)^2/8$ by (\ref{eq:11.8.5.23.27}).
The result  follows from
Theorem~\ref{thm:absence-eigenvalues-1}.
\end{proof}
\begin{remarks}\label{rem:volume-growth-curv}
\begin{enumerate}[1)]
\item \label{item:9}The radial curvatures $R_{\mathrm{rad}}$ and $K_{\mathrm{rad}}$ of \cite{IS} and \cite{Ku}, respectively,
are different objects but they contain equivalent information. 
\item \label{item:10}The inequalities (\ref{eq:10.2.8.3.55}) and
  (\ref{eq:10.9.2.23.19}) may  be viewed as  bounds on the minimal and 
the mean curvatures (including   the differential of the latter) of $S_r$, respectively, whereas
\eqref{eq:11.8.7.12.48}  certainly is a uniform asymptotic result
for  all the principal
curvatures.
\end{enumerate}
\end{remarks}

\section{Mourre-type commutator}\label{sec:11.8.1.18.3}
Suppose from this point
Conditions~\ref{cond:diffeo}--\ref{cond:11.7.9.0.24}. As a preliminary
step in the proof of Theorem~\ref{thm:absence-eigenvalues-1} we show
in this section a version of the so-called Mourre estimate. We shall use the Mourre-type commutator with respect to the ``conjugate operator''
\begin{align*}
A=\i[H_0,r^2]
=\tfrac{1}{2}\{(\partial_i r^2)g^{ij}p_j+p_i^* g^{ij}(\partial_j r^2)\}
=rp^r+(p^r)^*r;\quad p^r=-\mathrm{i}\partial^r.
\end{align*}
While not necessarily being self-adjoint this operator is 
certainly symmetric as defined on $C^\infty_{\mathrm{c}}(M)$, and 
that suffices for our applications.
\begin{lemma}\label{lem:09.12.8.11.10}
As a quadratic form on $C_{\mathrm{c}}^\infty(M)$,
\begin{align*}
\mathrm{i}[H,A]
&{}=p_i^*(\nabla^2r^2-\tfrac{1}{2}\rho_1g)^{ij}p_j
+\tfrac{1}{2}(\rho_1H_0+H_0\rho_1)
+\mathrm{i}\alpha^ip_i-\mathrm{i}p_i^*\alpha^i+\beta;\\
\alpha_i&{}=\tfrac{1}{4}(\partial_i\rho_2)+V_1(\partial_i r^2),
\\
\beta&{}=(\triangle r^2)V_1-2r\partial^r V_2.
\end{align*}
\end{lemma}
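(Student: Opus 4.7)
The plan is to split $H = H_0 + V_1 + V_2$ and handle the three pieces $\i[H_0, A]$, $\i[V_1, A]$, $\i[V_2, A]$ of $\i[H, A]$ separately on the form domain $C^\infty_{\mathrm{c}}(M)$.

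For the smooth potential piece $V_2 \in C^1(M)$, a direct computation using $[V_2, p^r] = -\i\,\partial^r V_2$ gives $\i[V_2, A] = \i[V_2, rp^r + (p^r)^*r] = -2r\,\partial^r V_2$, which accounts for the second term of $\beta$. For the rough piece $V_1\in L^2_{\mathrm{loc}}(M)$ the naive answer $-2r\,\partial^r V_1$ is not available, so I would verify the algebraic identity, valid as a quadratic form on $C^\infty_{\mathrm{c}}(M)$,
\begin{align*}
\i[V_1, A] = \i(V_1\partial^i r^2)p_i - \i p_i^*(V_1\partial^i r^2) + V_1\triangle r^2,
\end{align*}
which only involves $V_1$ as a multiplication operator. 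This follows from the divergence identity $(\partial^i r^2)p_i - p_i^*(\partial^i r^2) = \i\triangle r^2$ (integration by parts) combined with $[p_i^*, V_1] = -\i\partial_i V_1$; it yields the $V_1\partial_i r^2$ part of $\alpha_i$ and the first term of $\beta$.

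The main calculation is the double commutator $\i[H_0, A] = \i[H_0, \i[H_0, r^2]]$. Writing $A = X + X^*$ with $X = (\partial^i r^2)p_i$, applying the general identity $\i[H_0, \phi] = \tfrac12\bigl(p_i^*(\partial^i\phi) + (\partial^i\phi)p_i\bigr)$ for smooth $\phi$, and using the Leibniz rule for commutators, I would expand in local coordinates. The covariant Hessian $\nabla^2 r^2$ arises by combining the ordinary second partial derivative $\partial_i \partial_j r^2$ with the Christoffel corrections coming from commuting through $g^{ij}$. The outcome is
\begin{align*}
\i[H_0, \i[H_0, r^2]] = p_i^*(\nabla^2 r^2)^{ij} p_j + \tfrac{\i}{4}(\partial^i\triangle r^2) p_i - \tfrac{\i}{4} p_i^*(\partial^i\triangle r^2),
\end{align*}
whose principal symbol $(\nabla^2 r^2)^{ij}\xi_i\xi_j$ matches the double Poisson bracket $\{H_0,\{H_0, r^2\}\}$ as a consistency check.

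To arrive at the asserted form I would insert $\triangle r^2 = \rho_1 + \rho_2$ and use the identity (verified by commuting $\rho_1$ past $p_i^*$ via $[p_i^*, \rho_1] = -\i\partial_i\rho_1$)
\begin{align*}
\tfrac{\i}{4}\bigl((\partial^i\rho_1)p_i - p_i^*(\partial^i\rho_1)\bigr) = -\tfrac12 p_i^*(\rho_1 g^{ij}) p_j + \tfrac12(\rho_1 H_0 + H_0 \rho_1).
\end{align*}
Thus the $\rho_1$-part of the first-order correction combines with the Hessian term to produce $p_i^*\bigl(\nabla^2 r^2 - \tfrac12\rho_1 g\bigr)^{ij}p_j + \tfrac12(\rho_1 H_0 + H_0\rho_1)$, while the $\rho_2$-part remains as the first-order contribution $\tfrac14 \partial_i\rho_2$ to $\alpha_i$; assembling this with the $V_j$ commutators completes the formula. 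The main obstacle is the double commutator computation itself: tracking the subprincipal corrections requires careful bookkeeping of Christoffel symbols when passing partial derivatives through $g^{ij}$, so that one correctly recovers the covariant Hessian $\nabla^2 r^2$ and cleanly separates the $\triangle r^2$-dependence.
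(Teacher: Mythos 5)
Your proposal is correct and follows essentially the same route as the paper: split $H=H_0+V_1+V_2$, evaluate $\i[H_0,A]=-[H_0,[H_0,r^2]]$ via the Hessian double-commutator identity (the paper's \eqref{eq:seconDer}), and reorganize the $\rho_1$-contribution using the Gauss--Green/IMS-type identity \eqref{eq:IMS2}, while computing the $V_1,V_2$ commutators directly as quadratic forms. One small sign slip: the intermediate identity should read $[V_2,p^r]=+\i\,\partial^r V_2$ (since $p^r=-\i\partial^r$), which is what produces your (correctly stated) conclusion $\i[V_2,A]=-2r\,\partial^r V_2$.
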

\begin{proof}
  We note the  commutator  formulas, valid for any $\phi\in C^\infty(M)$, 
\begin{subequations}
\begin{align}  \label{eq:seconDer}-[H_0,[H_0,\phi]]&= p_i^*(\nabla^2\phi)^{ij} p_j-\tfrac 14 (\triangle^2
 \phi),\\
  \label{eq:IMS2}
 p_i^* \phi g^{ij}p_j&= \phi H_0+ H_0 \phi +\tfrac 12 (\triangle
 \phi).
\end{align}
  \end{subequations} 
As for \eqref{eq:seconDer} we refer to
  \cite[Lemma 2.5]{Do}  or  \cite[Corollary
4.2]{IS}.  The lemma follows by first using  \eqref{eq:seconDer} with
$\phi=r^2$ and then \eqref{eq:IMS2} with
$\phi=\tfrac 12  \rho_1$.
\end{proof}
 We introduce for $\sigma\ge 0$ 
\begin{align}\label{eq:Ham_sigma}
H_\sigma=H-\tfrac{\sigma^2}{2}|\mathrm{d} r|^2.
\end{align}  We shall consider  $H_\sigma$ and as an
operator defined on 
$C_{\mathrm{c}}^\infty(M)$ only. We recall the definitions of
$\chi_\nu$ and  $E_0$,  
(\ref{eq:11.7.11.5.14}) and \eqref{eq:E_0}, respectively.
\begin{corollary}\label{cor:10.10.13.15.00}
Let $E\in (E_0,\infty)$.
There exist $\gamma>0$ and $C>0$ such that,
if $\nu\ge 1$ is large, then for any $\sigma\ge 0$,
as quadratic forms on $C_{\mathrm{c}}^\infty(M)$,
\begin{align*}
\chi_\nu\mathrm{i}[H_{\sigma},A]\chi_\nu
\ge {}&\gamma\chi_\nu^2
-C\chi_\nu (H_{\sigma}-E)^2 \chi_\nu.
\end{align*}
\end{corollary}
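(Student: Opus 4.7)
The plan is to expand $\i[H_\sigma,A]$ into a sum of terms using Lemma \ref{lem:09.12.8.11.10} and then estimate each piece as a quadratic form on $C_{\mathrm{c}}^\infty(M)$ localized by $\chi_\nu$. First I would record the elementary identity $\i[f,A]=-2r\partial^r f$ for any $f\in C^\infty(M)$ (obtained from $[f,p^r]=\i(\partial^r f)$ applied to the two halves of $A=rp^r+(p^r)^*r$); applied to $f=-\tfrac{\sigma^2}{2}|\d r|^2$ this yields
\begin{align*}
\i[H_\sigma,A]=\i[H,A]+\sigma^2r\partial^r|\d r|^2.
\end{align*}
By Lemma \ref{lem:09.12.8.11.10}, the right-hand side is a sum of a $p^*(\cdot)p$ term, a commutator $\tfrac12(\rho_1H_0+H_0\rho_1)$, a first-order symmetric piece $\i\alpha^ip_i-\i p_i^*\alpha^i$, the multiplication $\beta$, and the $\sigma$-correction. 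I would then pick $\nu$ so large that on $\supp\chi_\nu$ all the asymptotic statements in Conditions \ref{cond:diffeo} and \ref{cond:10.6.1.16.24} hold up to an arbitrarily small $\delta>0$.

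The leading positive term is controlled by \eqref{eq:10.2.8.3.55} and $\rho_1\to 0$: $\nabla^2r^2-\tfrac12\rho_1g\geq(c-\delta)g$ pointwise on $\supp\chi_\nu$, which gives the form bound $p_i^*(\nabla^2r^2-\tfrac12\rho_1g)^{ij}p_j\geq 2(c-\delta)H_0$. For the first-order piece I would complete the square via $0\leq(\sqrt\lambda p_i-\i\lambda^{-1/2}\alpha_i)^*g^{ij}(\sqrt\lambda p_j-\i\lambda^{-1/2}\alpha_j)$, which yields
\begin{align*}
\i\alpha^ip_i-\i p_i^*\alpha^i\geq -2\lambda H_0-\tfrac{1}{\lambda}|\alpha|_g^2.
\end{align*}
Choosing $\lambda=c-\tilde c-\delta$ (positive since $\tilde c<c$), the net coefficient of $H_0$ becomes exactly $2\tilde c$, while $|\alpha|_g^2\leq\tfrac{1}{16}|\d\rho_2|^2+o(1)$, because the cross contribution from $\alpha_i\supset V_1(\partial_ir^2)$ is controlled by $r^2V_1^2=o(1)\cdot o(1)\to 0$ from $rV_1\to 0$.

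Now I would substitute $H_0=H_\sigma-V+\tfrac{\sigma^2}{2}|\d r|^2$ into the $2\tilde c H_0$ term. The $\sigma$-pieces combine, using the liminf in \eqref{eq:11.7.15.8.1}, as
\begin{align*}
\tilde c\sigma^2|\d r|^2+\sigma^2r\partial^r|\d r|^2=\sigma^2\bigl(r\partial^r|\d r|^2+\tilde c|\d r|^2\bigr)\geq\sigma^2\epsilon_0\geq 0,
\end{align*}
for some $\epsilon_0>0$ on $\supp\chi_\nu$ (large $\nu$). The remaining multiplication piece is
\begin{align*}
2\tilde c(E-V)-\tfrac{|\d\rho_2|^2}{16(c-\tilde c)}=2\tilde c\Bigl(E-V-\tfrac{|\d\rho_2|^2}{32(c-\tilde c)\tilde c}\Bigr)\geq 2\tilde c(E-E_0-\delta),
\end{align*}
by the definition \eqref{eq:E_0} of $E_0$, which for $\delta<(E-E_0)/2$ is bounded below by a positive constant $\gamma_0$. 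The $\beta$-term contributes $\beta\geq-\delta$ using $rV_1\to 0$, $r^{-1}|\rho_2|$ bounded, and $\limsup r\partial^rV_2\leq 0$ from \eqref{eq:60}.

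The remaining operator-valued terms $2\tilde c(H_\sigma-E)$ and $\tfrac12(\rho_1H_0+H_0\rho_1)$ must be absorbed into $-C\chi_\nu(H_\sigma-E)^2\chi_\nu$. For the first I would apply the elementary form inequality $\pm\chi_\nu(H_\sigma-E)\chi_\nu\leq\tfrac12\chi_\nu^2+\tfrac12\chi_\nu(H_\sigma-E)^2\chi_\nu$, which holds because $\|(H_\sigma-E)\chi_\nu\psi\|^2=\langle\chi_\nu(H_\sigma-E)^2\chi_\nu\psi,\psi\rangle$. For the second, rewrite $H_0=H_\sigma-V+\tfrac{\sigma^2}{2}|\d r|^2$: the $\rho_1V$ and $\rho_1\sigma^2|\d r|^2$ contributions are $o(1)$ and absorbable into the $\sigma$-positive term, respectively, while $\tfrac12(\rho_1H_\sigma+H_\sigma\rho_1)=\tfrac12(\rho_1(H_\sigma-E)+(H_\sigma-E)\rho_1)+E\rho_1$ is handled by Cauchy--Schwarz on $\supp\chi_\nu$: since $|\rho_1|\leq\delta$ there,
\begin{align*}
|\langle\chi_\nu\rho_1(H_\sigma-E)\chi_\nu\psi,\psi\rangle|\leq\tfrac{\delta}{2\mu}\langle\chi_\nu(H_\sigma-E)^2\chi_\nu\psi,\psi\rangle+\tfrac{\mu\delta}{2}\|\chi_\nu\psi\|^2.
\end{align*}
Choosing first $\delta$ small (depending only on $E-E_0$ and fixed quantities $c,\tilde c$) and then $\mu$ to complete the balancing gives the required $\gamma>0$ and $C<\infty$ uniform in $\sigma\geq 0$. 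The principal obstacle is this last bookkeeping step: one must track the errors produced when moving $\chi_\nu$ past $p^*$, $H_\sigma$ and $\rho_1$ and verify that all remainders either vanish as $\nu\to\infty$ or are dominated by the single admissible error $\chi_\nu(H_\sigma-E)^2\chi_\nu$.
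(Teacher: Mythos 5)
Your proof is correct and follows essentially the same route as the paper: expand $\i[H_\sigma,A]$ via Lemma~\ref{lem:09.12.8.11.10} together with $\i[H_\sigma,A]=\i[H,A]+\sigma^2 r\partial^r|\d r|^2$, complete the square on the first-order term against the elliptic part, substitute $H_\sigma-E$ for $H_0$, balance the $\sigma$-terms using \eqref{eq:11.7.15.8.1}, identify the constant lower bound from the definition of $E_0$, and absorb the residual linear $(H_\sigma-E)$ terms into $-C\chi_\nu(H_\sigma-E)^2\chi_\nu$ by Cauchy--Schwarz (the paper does the same, merely parametrizing the slack by a constant $c'\in(0,\tilde c)$ and folding $\tfrac12\rho_1$ into the coefficient $(c'+\tfrac12\rho_1)$ rather than by your $\delta$ and a separate $\rho_1H_0$ error term). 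The ``principal obstacle'' you flag at the end is not actually there --- $\chi_\nu$ never has to be commuted past any operator, since $\chi_\nu p^*gp\chi_\nu=2\chi_\nu H_0\chi_\nu$ and $\chi_\nu H_\sigma\chi_\nu=\chi_\nu(H_\sigma-E)\chi_\nu+E\chi_\nu^2$ hold exactly.
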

\begin{proof} We shall use Lemma 3.1 and in  particular the functions
  $\alpha$ and $ \beta$ appearing there.
Choose   constants $c'\in (0,\tilde c)$ and 
 $\gamma>0$ such that for  all large
enough $r\geq 1$
\begin{align}\label{eq:4basis}
  r\partial^r|\mathrm{d} r|^2\geq-\tfrac{2c'+\rho_1}{2}|\mathrm{d}
  r|^2\mand \quad E-V
-\tfrac{\alpha^2}{2(c-\tilde c) c'}\geq \gamma/c'.
\end{align} 
Noting $|\triangle r^2|\le Cr$ for large $r$,
cf. \eqref{eq:10.9.2.23.19},  
we  have for  all large
 $r\geq 1$
\begin{subequations}
\begin{align}
\label{eq:4a}
&\nabla^2r^2-\tfrac{1}{2}\rho_1g\geq (c+c'-\tilde c )g,\\
&\beta-\rho_1V+\rho_1 E \geq -\tfrac\gamma 2,\label{eq:4b}\\
&(c'+\tfrac12 \rho_1)^2\leq \tilde c^2.\label{eq:4c}
\end{align}
 \end{subequations}
Then by using \eqref{eq:4a} and the Cauchy Schwarz inequality we
 obtain for all large $\nu\geq 1$ 
\begin{align}
  \begin{split}\label{eq:11.7.12.5.43}
&\chi_\nu\mathrm{i}[H_{\sigma},A]\chi_\nu
\ge \chi_\nu\Bigl\{
(c'+\tfrac{1}{2}\rho_1)(H_\sigma-E)+(H_\sigma-E)(c'+\tfrac{1}{2}\rho_1)
-\tfrac{\alpha^2}{(c-\tilde c)}\\ 
 &  -(2c'+\rho_1)V +(2c'+\rho_1)(\tfrac 12 \sigma^2|\mathrm{d}
 r|^2+E)+\beta+\sigma^2r\partial^r|\mathrm{d} r|^2 
\Bigr\}\chi_\nu.
\end{split}
\end{align}
By using  in turn \eqref{eq:4basis}, \eqref{eq:4b}   and
\eqref{eq:4c} we obtain with $C:=2\tilde c^2/\gamma $
\begin{align*}
\chi_\nu\mathrm{i}[H_{\sigma},A]\chi_\nu
&\ge \chi_\nu
\Bigl\{2c'E-2c'V-\tfrac{\alpha^2}{(c-\tilde
  c)}-(c'+\tfrac{1}{2}\rho_1)^2/C-C(H_\sigma-E)^2-\tfrac\gamma 2
\Bigr\}\chi_\nu\\
&\ge \chi_\nu
\Bigl\{2\gamma-\tfrac\gamma 2-C(H_\sigma-E)^2-\tfrac
\gamma 2
\Bigr\}\chi_\nu,
\end{align*}
  and whence  the assertion.
\end{proof}

\section{Exponential decay of eigenstates}
The proof of Theorem~\ref{thm:absence-eigenvalues-1}, given in this
section,  depends on the following exponential decay estimate which in
turn will be proved in Section \ref{sec:auxiliary-operators}.

\begin{proposition}\label{prop:absence-eigenvalues-1b}
Let $E\in\sigma_{\pp}(H)\cap (E_0,\infty)$ and suppose 
$\phi\in\vD(H)$ satisfies $H\phi = E\phi$. 
Then for any $\sigma\ge 0$ one has $\e^{\sigma r}\phi\in \vH$.
\end{proposition}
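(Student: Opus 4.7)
The strategy is the classical Froese--Herbst bootstrap, driven by the geometric Mourre estimate of Corollary~\ref{cor:10.10.13.15.00}. Set
\[
\Sigma:=\{\sigma\ge 0\mid \e^{\sigma r}\phi\in \vH\};
\]
since $r\ge 1$ and the weight is monotone, $\Sigma$ is an interval containing $0$, and the claim is that $\sup\Sigma=\infty$. Arguing by contradiction, assume $\sigma_0:=\sup\Sigma<\infty$. I would complete the proof by showing first that $\sigma_0\in\Sigma$ and then that $\sigma_0+\epsilon\in \Sigma$ for some $\epsilon>0$, the latter contradicting the definition of $\sigma_0$.

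To keep everything in $\vH$ throughout, replace $\e^{\sigma r}$ by the regularized weight $\e^{F_{\sigma,\lambda}}$ with $F_{\sigma,\lambda}(r):=\sigma r/(1+\lambda r)$, $\lambda>0$; then $\e^{F_{\sigma,\lambda}}$ is bounded and $\e^{F_{\sigma,\lambda}}\nearrow \e^{\sigma r}$ as $\lambda\downarrow 0$. Conjugating $H\phi=E\phi$ by $\e^{F_{\sigma,\lambda}}$, justified through the $C^\infty_{\mathrm c}$-approximation provided by Condition~\ref{cond:11.7.11.1.25}, shows that $\phi_{\sigma,\lambda}:=\e^{F_{\sigma,\lambda}}\phi\in \vH$ satisfies a quasi-eigenvalue equation
\[
(H_{\sigma,\lambda}-E)\phi_{\sigma,\lambda}=R_{\sigma,\lambda},
\]
where $H_{\sigma,\lambda}:=H-\tfrac12|\d F_{\sigma,\lambda}|^2$ is the natural analogue of $H_\sigma$ in \eqref{eq:Ham_sigma} and $R_{\sigma,\lambda}$ is a skew-symmetric first-order remainder with coefficients of size $O(\sigma)$ generated by $\d F_{\sigma,\lambda}$. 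Applying Corollary~\ref{cor:10.10.13.15.00} as a quadratic form to $\chi_\nu\phi_{\sigma,\lambda}$ (again via the $C^\infty_{\mathrm c}$-density of Condition~\ref{cond:11.7.11.1.25}), and rewriting the commutator term $\langle i[H_{\sigma,\lambda},A]\rangle_{\chi_\nu\phi_{\sigma,\lambda}}$ via the virial identity
\[
\langle i[H_{\sigma,\lambda},A]\rangle_\psi=2\Im\langle A\psi,(H_{\sigma,\lambda}-E)\psi\rangle,\qquad \psi=\chi_\nu\phi_{\sigma,\lambda},
\]
one arrives, after letting $\nu\to\infty$ and controlling the $[H,\chi_\nu]$ commutator errors via the $\d r$-bounds in Condition~\ref{cond:diffeo}, at an a priori inequality of the form
\[
\gamma\|\phi_{\sigma,\lambda}\|^2\le C_{\sigma_0}\|\phi\|^2+C\|\phi_{\sigma,\lambda}\|\cdot\|R_{\sigma,\lambda}\|.
\]
Since $\|R_{\sigma,\lambda}\|$ itself is controlled by $\sigma\cdot\|\phi_{\sigma,\lambda}\|$ plus lower-order pieces, absorption closes the inequality uniformly in $\sigma\le\sigma_0$ and $\lambda>0$; monotone convergence as $\lambda\downarrow 0$ then yields $\sigma_0\in\Sigma$.

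The main obstacle is extracting \emph{strictly more} decay than $\sigma_0$. Here the crucial observation is that the constants $\gamma,C$ in Corollary~\ref{cor:10.10.13.15.00} are independent of $\sigma\ge 0$, so the very same a priori inequality applies at $\sigma=\sigma_0+\epsilon$. Using the just-established bound $\|\phi_{\sigma_0}\|<\infty$ and the fact that $R_{\sigma_0+\epsilon,\lambda}-R_{\sigma_0,\lambda}$ is $O(\epsilon)$ in an appropriate operator sense, the resulting quadratic inequality can be closed by absorption for $\epsilon$ sufficiently small, forcing $\sigma_0+\epsilon\in\Sigma$. The main technical subtleties throughout are that $A$ is merely symmetric on $C^\infty_{\mathrm c}(M)$ and that Corollary~\ref{cor:10.10.13.15.00} is only a weak Mourre estimate (the commutator is not $H$-bounded), which forces every manipulation to stay at the quadratic-form level on cut-off vectors and forces the $\nu\to\infty$ and $\lambda\downarrow 0$ limits to be justified separately. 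The full strength of Condition~\ref{cond:diffeo}, in particular the decomposition $\triangle r^2=\rho_1+\rho_2$ together with the bound on $|\d\rho_2|$, is what keeps the inevitable lower-order commutator errors controlled and makes the absorption arguments go through.
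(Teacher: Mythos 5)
Your overall strategy -- Froese--Herbst exponential bootstrap driven by Corollary~\ref{cor:10.10.13.15.00}, regularized weights, and a contradiction on $\sigma_0<\infty$ -- is the same as the paper's. However there are two concrete gaps in your sketch.

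First, the two-step bootstrap (show $\sigma_0\in\Sigma$, then $\sigma_0+\epsilon\in\Sigma$) has an absorption problem at the first step. Your a priori inequality has an error term of size $O(\sigma)\|\phi_{\sigma,\lambda}\|$ (coming from the first-order operator generated by $\d F_{\sigma,\lambda}$), and for $\sigma$ close to $\sigma_0$ the coefficient $O(\sigma_0)$ is \emph{not} small relative to $\gamma$, so the quadratic inequality does not close uniformly as $\lambda\downarrow 0$. The paper sidesteps this entirely by never trying to establish $\sigma_0\in\Sigma$: it picks $\sigma<\sigma_0$ (so $\e^{\sigma r}\phi\in\vH$ is automatic) and builds the weight $\Theta_m=\sigma r+\delta\theta_m(r)$, $\theta_m(r)=r(1+r/m)^{-1}$, with $\delta$ small and $\sigma+\delta>\sigma_0$. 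Here the $\sigma r$ part needs no bootstrap, and the genuinely new decay $\delta\theta_m$ carries a \emph{small} coefficient $\delta$, so the errors it generates are absorbable; letting $m\to\infty$ then yields $\chi_\nu\e^{(\sigma+\delta)r}\phi\in\vH$, a contradiction. Your $F_{\sigma,\lambda}=\sigma r/(1+\lambda r)$ has only one coefficient, $\sigma$, which is not small near $\sigma_0$, so it cannot play both roles.

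Second, and more fundamentally, your sketch does not explain how the $A\psi$ factor in the virial identity is to be controlled. Corollary~\ref{cor:10.10.13.15.00} bounds $\|\psi_m\|^2$ and $\|(H_\sigma-E)\psi_m\|^2$, not $\|A\psi_m\|$, and $A$ is a genuinely unbounded first-order operator. The paper's Lemma~\ref{lemm:absence-eigenvalues-2}\ref{item:3} is the tool that makes this work: the conjugation of $H\phi=E\phi$ produces $(H_\sigma-E)\psi_m = -B_m\psi_m + \dots$, so the virial pairing yields $-2\Re\inp{AB_m}_{\psi_m}$, and the identity $\Re(AB_m)\ge 2Br\Theta_m'B-(\epsilon+C\delta)$ converts this into a \emph{positive} term $\inp{Br\Theta_m'B}_{\psi_m}$ plus a small error. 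The crucial cancellation is that the same quantity $\inp{Br\Theta_m'B}_{\psi_m}$ arises (Lemma~\ref{lemm:absence-eigenvalues-2}\ref{item:4}) in the estimate of $\|(H_\sigma-E)\psi_m\|^2$, with a coefficient $\epsilon'$ that can be chosen so the two contributions cancel. Without this algebraic structure the estimate $\gamma\|\phi_{\sigma,\lambda}\|^2\le C_{\sigma_0}\|\phi\|^2+C\|\phi_{\sigma,\lambda}\|\,\|R_{\sigma,\lambda}\|$ that you write down simply is not available: the right-hand side as obtained from the virial identity involves $\|A\psi_m\|$ and first-order derivative terms that are not bounded by $\|\phi_{\sigma,\lambda}\|$.
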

To implement Condition \ref{cond:11.7.11.1.25} efficiently we need 
to strengthen the stated approximation property under some additional
conditions (fulfilled for eigenstates 
 due to Proposition \ref{prop:absence-eigenvalues-1b}). 
\begin{lemma}\label{lem:11.7.19.22.23}
Let $\psi\in {\mathcal D}(H)$. There exists $\nu_0\ge 1$ such that for
$\nu\ge \nu_0$ and for any $\sigma\ge 0$ such that  $\mathrm{e}^{\sigma r}\psi,\mathrm{e}^{\sigma r}H\psi \in {\mathcal H}$
the following properties hold: The states $\chi_\nu\mathrm{e}^{\sigma r}p
\psi,\mathrm{e}^{\sigma r}p \chi_\nu\psi\in{\mathcal H}$ and there
exists a sequence $\psi_n\in C^\infty_{\mathrm{c}}(M)$ (possibly
depending on $\sigma$) such that as $n\to \infty$
\begin{align}
\|\chi_\nu\mathrm{e}^{\sigma r}(\psi-\psi_n)\|
+\|\chi_\nu\mathrm{e}^{\sigma r}(p \psi-p\psi_n)\|
+\|\chi_\nu\mathrm{e}^{\sigma r}(H\psi-H\psi_n)\|\to 0.
\label{eq:11.7.19.12.53}
\end{align}
\end{lemma}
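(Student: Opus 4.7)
The plan is to first establish a weighted IMS-type bound controlling the $p$-norm by the weighted graph norm of $H$, then to deduce the two membership statements from it, and finally to construct the approximating sequence by combining the unweighted approximation from Condition~\ref{cond:11.7.11.1.25} with a slowly growing $r$-cutoff.

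\emph{Step 1 (weighted IMS bound).} For $\varphi\in C^\infty_{\mathrm{c}}(M)$ and real nonnegative $\phi\in C^\infty_{\mathrm{c}}(M)$, identity \eqref{eq:IMS2} gives
\begin{equation*}
\|\phi^{1/2}|p\varphi|\|^2 = 2\Re\inp{\varphi,\phi H_0\varphi}+\tfrac{1}{2}\inp{\varphi,(\triangle\phi)\varphi}.
\end{equation*}
Taking $\phi=\chi_\nu^2\e^{2\sigma r}\zeta_R^2$ with $\zeta_R\in C^\infty_{\mathrm{c}}(M)$ equal to $1$ on $\{r\le R\}$ and supported in $\{r\le 2R\}$, and using Condition~\ref{cond:diffeo} (which bounds $|\d r|$ and $|\triangle r|$ for large $r$, so that $|\triangle\phi|\le C_{\nu,\sigma}\phi$ plus terms supported on $\supp\d\chi_\nu\cup\supp\d\zeta_R$) together with Condition~\ref{cond:10.6.1.16.24} to control $\phi V\varphi$, one obtains
\begin{equation*}
\|\chi_\nu\e^{\sigma r}\zeta_R|p\varphi|\|^2\le C_{\nu,\sigma}\bigl(\|\chi_\nu'\e^{\sigma r}\zeta_R\varphi\|^2+\|\chi_\nu\e^{\sigma r}\zeta_R H\varphi\|^2\bigr),
\end{equation*}
where $\chi_\nu'$ is a slightly wider cutoff absorbing the boundary terms from $\d\chi_\nu$.

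\emph{Step 2 (membership).} Apply Condition~\ref{cond:11.7.11.1.25} to $\psi$ to obtain $\tilde\psi_n\in C^\infty_{\mathrm{c}}(M)$ with $\epsilon_n:=\|\chi_\nu(\psi-\tilde\psi_n)\|+\|\chi_\nu(H\psi-H\tilde\psi_n)\|\to 0$. Applying the Step 1 bound to $\tilde\psi_n$, the right-hand side converges (for fixed $R$, using the compactness of $\supp\zeta_R$) to $C_{\nu,\sigma}(\|\chi_\nu'\e^{\sigma r}\zeta_R\psi\|^2+\|\chi_\nu\e^{\sigma r}\zeta_R H\psi\|^2)$, which is uniformly bounded in $R$ by the hypotheses $\e^{\sigma r}\psi,\e^{\sigma r}H\psi\in\vH$. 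Weak lower semicontinuity of the $p$-norm (noting that $p\tilde\psi_n\to p\psi$ distributionally) transfers the bound to $\psi$, and monotone convergence as $R\to\infty$ yields $\chi_\nu\e^{\sigma r}p\psi\in\vH$. The second membership $\e^{\sigma r}p\chi_\nu\psi\in\vH$ follows from the product rule $p(\chi_\nu\psi)=\chi_\nu p\psi-\i(\d\chi_\nu)\psi$ together with $\supp\d\chi_\nu\subset\{\nu\le r\le 2\nu\}$.

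\emph{Step 3 (approximating sequence).} Set $\psi_n:=\zeta_{R_n}\tilde\psi_n$, where $R_n\to\infty$ is chosen slowly enough that $\e^{2\sigma R_n}\epsilon_n\to 0$. The split
\begin{equation*}
\psi-\psi_n=(1-\zeta_{R_n})\psi+\zeta_{R_n}(\psi-\tilde\psi_n)
\end{equation*}
gives $\|\chi_\nu\e^{\sigma r}(\psi-\psi_n)\|\to 0$: the first summand tends to zero because $\e^{\sigma r}\psi\in\vH$ and $\supp(1-\zeta_{R_n})\subset\{r\ge R_n\}$, and the second is bounded by $\e^{2\sigma R_n}\epsilon_n$. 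The $H$-term decomposes analogously, the only new piece being $\chi_\nu\e^{\sigma r}[H,\zeta_{R_n}]\tilde\psi_n$, treated by the further split $[H,\zeta_{R_n}]\tilde\psi_n=[H,\zeta_{R_n}]\psi+[H,\zeta_{R_n}](\tilde\psi_n-\psi)$; the first summand vanishes in the limit by Step 2 together with $\supp\d\zeta_{R_n}\subset\{R_n\le r\le 2R_n\}$, and the second is controlled by $C\e^{2\sigma R_n}\epsilon_n$ after invoking the unweighted Step 1 to bound $\|\chi_\nu'p(\tilde\psi_n-\psi)\|$ by $\epsilon_n$. Finally the $p$-convergence is obtained by applying the weighted Step 1 bound to $\psi-\psi_n$, whose weighted graph-norm convergence has just been verified. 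The chief technical hurdle is this last commutator estimate, where the membership $\chi_\nu\e^{\sigma r}p\psi\in\vH$ from Step 2 is essential to handle the $p_i\psi$ contribution on the annular support $\{R_n\le r\le 2R_n\}$.
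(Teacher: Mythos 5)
Your proof is essentially correct and follows the same overall strategy as the paper (weighted IMS bound on $C^\infty_{\mathrm{c}}(M)$, then approximate $\psi$ by cutoff-truncated Condition~\ref{cond:11.7.11.1.25} approximants, controlling the commutator with the truncating cutoff), but the organization is genuinely different in one respect. The paper treats $\sigma=0$ first (obtaining membership $\chi_\nu p\psi\in\vH$ as a byproduct of the Cauchy argument) and then, for $\sigma>0$, derives the quantitative decay $\|\e^{\sigma r}R_{\nu'}\psi\|\le C/\nu'$ in \eqref{eq:42term} by applying the IMS bound to approximants and passing to the limit; in contrast, you establish the weighted membership $\chi_\nu\e^{\sigma r}p\psi\in\vH$ for all $\sigma$ at once (via an $R$-cutoff IMS bound, weak lower semicontinuity, and monotone convergence) and then use the resulting integrability plus tail decay to kill the commutator term directly. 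Your ordering (membership first, then approximation) is arguably cleaner than the paper's interleaving. Two technicalities to fix in the write-up: (a) sublevel sets $\{r\le R\}$ need not be compact (cf. Subsection~\ref{sec:conditions-inside-an}), so one cannot choose $\zeta_R\in C^\infty_{\mathrm{c}}(M)$ equal to $1$ on $\{r\le R\}$ in general; instead take $\zeta_R$ a function of $r$ only (as in the paper's $\bar\chi_{\nu'}$), after which Step 2 still goes through because what is actually used there is boundedness of the weight $\e^{\sigma r}\zeta_R$, not compactness of $\supp\zeta_R$; and (b) the boundary terms of $\triangle(\chi_\nu^2\e^{2\sigma r}\zeta_R^2)$ supported on $\supp\d\zeta_R$ are not dominated by $\zeta_R^2$ (which can vanish there) but by the square of a slightly wider cutoff $\tilde\zeta_R\,\chi_{\nu/2}\e^{\sigma r}$, so the right-hand side of your Step 1 bound should carry that wider cutoff; this is harmless, since after $n\to\infty$ it is still uniformly bounded in $R$ by $\|\e^{\sigma r}\psi\|^2+\|\e^{\sigma r}H\psi\|^2$. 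Finally, as you implicitly note, the applications of the IMS bound to $\psi-\psi_n$ and to $\tilde\psi_n-\psi$ (which are not in $C^\infty_{\mathrm{c}}(M)$) should formally be performed as Cauchy arguments on differences of approximants and then passed to the limit, exactly as in the paper's Step~III.
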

\begin{proof} {\noindent \it  Step I} Note the distributional identity
  \begin{equation*}
   \chi_\nu\mathrm{e}^{\sigma r}p \psi=\mathrm{e}^{\sigma r}p
   \chi_\nu\psi+\i \mathrm{e}^{\sigma r}  \psi\chi_\nu'\d r.
  \end{equation*} Applied to the given $\psi$ we see that
  $\chi_\nu\mathrm{e}^{\sigma r}p \psi \in{\mathcal H}$ if and only if
  $\mathrm{e}^{\sigma r}p \chi_\nu\psi\in{\mathcal H}$.

{\noindent \it  Step II} We  claim that 
there exists $C>0$ such that,
if $\nu\ge 1$ is large, then for any $\psi\in C^\infty_{\mathrm{c}}(M)$
and $\sigma\ge 0$
\begin{align}
\|\chi_\nu\mathrm{e}^{\sigma r}|p \psi|\|^2
\le \|\chi_\nu\mathrm{e}^{\sigma r}H\psi\|^2
+C\langle \sigma\rangle^2\|\chi_{\nu/2}\mathrm{e}^{\sigma r}\psi\|^2.
\label{eq:11.7.22.9.5}
\end{align}
In fact  by  \eqref{eq:IMS2} 
\begin{align*}
\|\chi_\nu\mathrm{e}^{\sigma r}|p\psi|\|^2
&=
2\mathop{\mathrm{Re}}{}\inp{\chi_\nu \mathrm{e}^{\sigma r}\psi, \chi_\nu\mathrm{e}^{\sigma r} H\psi}
+\tfrac{1}{2}\inp{\psi, (\triangle \chi_\nu^2\mathrm{e}^{2\sigma r})\psi}
-2\inp{\chi_\nu\mathrm{e}^{\sigma r}\psi,
V\chi_\nu\mathrm{e}^{\sigma r}\psi}\\
&\le \|\chi_\nu\mathrm{e}^{\sigma r}H\psi\|^2
+C\langle \sigma\rangle^2\|\chi_{\nu/2}\mathrm{e}^{\sigma r}\psi\|^2.
\end{align*} Here we used Condition~\ref{cond:diffeo} and the
following consequence
\begin{align}
|\triangle r|=\tfrac{1}{2r}|(\triangle r^2)-2|\mathrm{d}r|^2|\le
C\mfor r=r(x)\text { large}.
\label{eq:11.7.22.9.52}
\end{align}

{\noindent \it  Step III} We consider the case $\sigma=0$, and hence  suppose only $\psi \in {\mathcal D(H)}$.
Let $\psi_n\in C^\infty_{\mathrm{c}}(M)$ and large $\nu\ge 1$  be as in 
Condition~\ref{cond:11.7.11.1.25}.
Then, regarding   (\ref{eq:11.7.19.12.53}), it suffices to consider the middle term.
By (\ref{eq:11.7.22.9.5}) we have
\begin{align*}
\|\chi_\nu (p\psi_n-p \psi_{n'})\|^2
\le C\parb{\|\chi_\nu (H\psi_n-H\psi_{n'})\|^2
+\|\chi_{\nu/2}(\psi_n-\psi_{n'})\|^2}.
\end{align*}
This implies $\chi_\nu p \psi_n$ converges strongly.
 Since also $\chi_\nu p \psi_n$ converges in  distributional
 sense to $\chi_\nu p \psi$,
we obtain that the limit 
$\chi_\nu p\psi \in {\mathcal H}$ and then in turn, by
letting $n'\to \infty$ above,  
(\ref{eq:11.7.19.12.53}) for $\sigma=0$.

{\noindent \it  Step IV} We let $\sigma>0$ and 
suppose $\mathrm{e}^{\sigma r}\psi,\mathrm{e}^{\sigma r}H\psi \in {\mathcal H}$.
 Choose $\psi_n\in C^\infty_{\mathrm{c}}(M)$ and large $\nu\ge 1$ 
as in Condition~\ref{cond:11.7.11.1.25}, again.
As for the first and the third terms of (\ref{eq:11.7.19.12.53}),
we compute as follows:
Put $\psi_{n,\nu'}=\bar \chi_{\nu'}\psi_n$ for  $\nu'\ge
2\nu$ and with $\bar \chi_{\nu'}:=1-\chi_{\nu'}$. Then we decompose 
\begin{align}
\chi_\nu\mathrm{e}^{\sigma r}(\psi-\psi_{n,\nu'})
=\bar \chi_{\nu'}\mathrm{e}^{\sigma r}\chi_\nu(\psi-\psi_n)
+\chi_{\nu'}\mathrm{e}^{\sigma r}\psi.
\label{eq:11.7.19.20.55}
\end{align}
 We put
\begin{align}\label{eq:1}
  R_{\nu'}=\mathrm{i}[H,\chi_{\nu'}]
=\tfrac{1}{2}(\chi_{\nu'}'p^r+(p^r)^*\chi_{\nu'}')=\chi_{\nu'}'p^r-\tfrac
\i 2 \parb{\chi_{\nu'}''|\d r|^2+\chi_{\nu'}'\triangle r},
\end{align} and  decompose similarly
\begin{align}
\begin{split}
&\chi_\nu\mathrm{e}^{\sigma r}(H\psi-H\psi_{n,\nu'})\\
&=\bar \chi_{\nu'}\mathrm{e}^{\sigma r}\chi_\nu(H\psi-H\psi_n)
+\chi_{\nu'}\mathrm{e}^{\sigma r}H\psi
+\mathrm{i}\mathrm{e}^{\sigma r}R_{\nu'}(\psi-\psi_n)
-\mathrm{i}\mathrm{e}^{\sigma r}R_{\nu'}\psi.
\end{split}
\label{eq:11.7.19.20.56}
\end{align}
The norm of the right-hand side of (\ref{eq:11.7.19.20.55})
 can be arbitrarily small by first letting $\nu'$ be large and then
 $n$ large accordingly (using that $\bar \chi_{\nu'}\mathrm{e}^{\sigma
   r}$ is bounded). Similarly the norm of first three terms on the right-hand side of
 \eqref{eq:11.7.19.20.56} 
 can be arbitrarily small by first letting $\nu'$ be large and then
 $n$ large accordingly (for the third term we use
  Step III, i.e. \eqref{eq:11.7.19.12.53} with $\sigma=0$). It remains to consider the last term on the right-hand side of
 \eqref{eq:11.7.19.20.56}. We claim that
 \begin{equation}
   \label{eq:fourth term}
   \|\mathrm{e}^{\sigma r}R_{\nu'}\psi\|\leq C/\nu'.
 \end{equation} To show this we use again Step III to write
 \begin{equation*}
   \|\chi_{\nu'}'\mathrm{e}^{\sigma r}p\psi\|^2 =
\lim_{m\to \infty} \|\chi_{\nu'}'\mathrm{e}^{\sigma
  r}p\psi_m\|^2. 
 \end{equation*} On the other hand by the derivation of
 \eqref{eq:11.7.22.9.5}
 \begin{equation*}
   \|\chi_{\nu'}'\mathrm{e}^{\sigma
  r}p\psi_m\|^2\leq C\parb{\|\chi_{\nu'}'\mathrm{e}^{\sigma r}H\psi_m\|^2
+\parb{\tfrac {\langle \sigma\rangle}{\nu'}}^2\|\chi_{\nu/2}\bar \chi_{2\nu'}\mathrm{e}^{\sigma r}\psi_m\|^2},
 \end{equation*} and hence we conclude   by  taking the limit that 
\begin{align}\label{eq:42term}
  \begin{split}
 \|\chi_{\nu'}'\mathrm{e}^{\sigma
  r}p\psi\|^2&\leq \parb{\tfrac {C_\sigma}{\nu'}}^2 \parb{\|\chi_\nu\bar \chi_{2\nu'}\mathrm{e}^{\sigma r}H\psi\|^2
+\|\chi_{\nu/2}\bar \chi_{2\nu'}\mathrm{e}^{\sigma r}\psi\|^2}\\ & \leq \parb{\tfrac {C_\sigma}{\nu'}}^2 \parb{\|\mathrm{e}^{\sigma r}H\psi\|^2
+\|\mathrm{e}^{\sigma r}\psi\|^2}.
  \end{split}
\end{align} A consequence of \eqref{eq:42term} is indeed \eqref{eq:fourth
  term}, and whence in turn also  the last term on the right-hand side of
 \eqref{eq:11.7.19.20.56} is small for $\nu'$ sufficiently  large.

We conclude that there exists a sequence of indices $(\nu'(m),n(m))$
so that with 
$\psi_m:=\psi_{n(m),\nu'(m)}$ 
 (here and henceforth  slightly  abusing   notation)
\begin{equation*}
  \|\chi_\nu\mathrm{e}^{\sigma r}(\psi-\psi_m)\|
+\|\chi_\nu\mathrm{e}^{\sigma r}(H\psi-H\psi_m)\|\to 0.
\end{equation*} In particular, using here \eqref{eq:11.7.22.9.5},  the right-hand side of 
\begin{align*}
\|\chi_{2\nu}\mathrm{e}^{\sigma r}p (\psi_n-\psi_{n'})\|^2
\le C\parb{\|\chi_{2\nu}\mathrm{e}^{\sigma r}H(\psi_n-\psi_{n'})\|^2
+\|\chi_{\nu}\mathrm{e}^{\sigma r}(\psi_n-\psi_{n'})\|^2}
\end{align*} is small for $n,n'\to \infty$. We can from
this point  mimic the last part of Step III. 
\end{proof}
\begin{proof}[Proof of Theorem~\ref{thm:absence-eigenvalues-1}]
Suppose $E\in\sigma_{\pp}(H)\cap (E_0,\infty)$ and let $\phi$ be any corresponding eigenstate.
Then, by Proposition~\ref{prop:absence-eigenvalues-1b}, 
for any $\nu\ge 1$ and $\sigma\ge 0$
\begin{align}
\phi_\sigma=\phi_{\sigma,\nu}:=\chi_\nu\e^{\sigma (r-4\nu)}\phi\in \vH.
\label{eq:11.7.3.23.49}
\end{align}
We will choose $\nu\ge 1$ large in agreement with
Lemma~\ref{lem:11.7.19.22.23} with
 $\psi=\phi$.
In the following computations we actually have to first choose 
an approximate sequence for $\phi$ from $C^\infty_{\mathrm{c}}(M)$ and
then take the limits. This can be done by using 
Lemma~\ref{lem:11.7.19.22.23} and the closedness of $H$,  but since 
the verification is rather straightforward
we shall not elaborate on this point.

We compute, putting
$R_\nu=\mathrm{i}[H_0,\chi_\nu]=\mathop{\mathrm{Re}}{}\parb{\chi_\nu'p^r}$
as in \eqref{eq:1},
\begin{equation}\label{eq:19}
\begin{split}
H\phi_\sigma
={}&E\phi_\sigma+ \tfrac{\sigma^2}{2}|\mathrm{d} r|^2\phi_\sigma
-\mathrm{i}\sigma (\mathop{\mathrm{Re}}{}p^r)\phi_\sigma
-\mathrm{i}\mathrm{e}^{\sigma (r-4\nu)} R_\nu\phi.
\end{split}
\end{equation} In particular indeed $\phi_\sigma\in\vD (H)$.
Take  inner product with $\phi_\sigma$ and compute 
\begin{align*}
\inp{H}_{\phi_\sigma}
=\Re\inp{H}_{\phi_\sigma}=
\inp{E+ \tfrac{\sigma^2}{2}|\mathrm{d} r|^2}_{\phi_\sigma}
+\tfrac{\mathrm{i}}{2}\inp{[R_\nu,\chi_\nu\mathrm{e}^{2\sigma (r-4\nu)}]}_{\phi}.
\end{align*} Whence
\begin{align*}
\inp{H}_{\phi_\sigma}
\ge \inp{E+\tfrac{\sigma^2}{2}|\mathrm{d} r|^2}_{\phi_\sigma}-C\langle\sigma\rangle\|\phi\|^2,
\end{align*} 
where $C>0$ does not depend on $\nu$ or $\sigma$ because $r\leq 2\nu$
on $\mathop{\mathrm{supp}} \chi_\nu'$.
On the other hand  if $c'\in(0,\tilde c)$
and $\nu\ge 1$ is large then, cf.   (\ref{eq:11.7.12.5.43}) with $\sigma=0$, 
\begin{equation*}
2c'\inp{H}_{\phi_\sigma}\leq \inp{\i[H,A]}_{\phi_\sigma}
-\mathop{\mathrm{Re}}{}\inp{\rho_1H}_{\phi_\sigma}
+C\|\phi_\sigma\|^2.
\end{equation*} We fix  such $c'$ assuming in addition (for a later
application) 
\begin{equation}
  \label{eq:2'}
  \liminf_{r\to\infty}{}(r\partial^r|\mathrm{d} r|^2+ c'    |\mathrm{d} r|^2)>  0.
\end{equation}

We compute the first and the second terms on the right-hand side.
By \eqref{eq:19} again
\begin{align}
\begin{split}
&\inp{\i[H,A]}_{\phi_\sigma}\\
&=\sigma^2\mathop{\mathrm{Im}}{}\inp{A|\mathrm{d} r|^2}_{\phi_\sigma}
-2\sigma \mathop{\mathrm{Re}}{}\inp{(\mathop{\mathrm{Re}} p^r) A}_{\phi_\sigma}
-2\mathop{\mathrm{Re}}{}\inp{R_\nu \mathrm{e}^{\sigma (r-4\nu)}A\chi_\nu\mathrm{e}^{\sigma (r-4\nu)}}_{\phi},
\end{split}
\label{eq:11.7.13.4.10}
\end{align} while 
\begin{align}
\begin{split}
&-\mathop{\mathrm{Re}}{}\inp{\rho_1H}_{\phi_\sigma}\\
&=-E\inp{\rho_1}_{\phi_\sigma}
-\tfrac{\sigma^2}{2}\inp{\rho_1|\mathrm{d} r|^2}_{\phi_\sigma}
-\sigma\mathop{\mathrm{Im}}{}
\inp{\rho_1\mathop{\mathrm{Re}}{}p^r}_{\phi_\sigma}
-\mathop{\mathrm{Im}}{}
\inp{\rho_1\chi_\nu\mathrm{e}^{2\sigma (r-4\nu)} R_\nu}_\phi
\end{split}
\label{eq:11.8.1.8.16}
\end{align}
The first and the second terms of (\ref{eq:11.7.13.4.10}) are
estimated using 
\begin{align*}
\mathop{\mathrm{Im}}{}(A|\mathrm{d} r|^2)
&=-r(\partial^r|\mathrm{d} r|^2),\\
-2\mathop{\mathrm{Re}}{}((\mathop{\mathrm{Re}} p^r) A)
&=-(\mathop{\mathrm{Re}} p^r) (2r(\mathop{\mathrm{Re}}{}p^r)
-\mathrm{i}|\mathrm{d} r|^2)
+\mathrm{h.c.}
\le (\partial^r|\mathrm{d} r|^2).
\end{align*}
As for the third term of (\ref{eq:11.7.13.4.10}) we estimate
(recall the notation $\bar \chi_\nu=1-\chi_{\nu}$)
\begin{align*}
&-2\mathop{\mathrm{Re}}{}\inp{R_\nu \mathrm{e}^{\sigma (r-4\nu)}
A\chi_\nu\mathrm{e}^{\sigma (r-4\nu)}}_{\phi}\\
&\le \|\mathrm{e}^{\sigma (r-4\nu)}R_\nu\phi\|^2+
\|\bar\chi_{2\nu}A\chi_\nu\mathrm{e}^{\sigma (r-4\nu)}\phi\|^2\\
&\le \bigl\{\|\chi_\nu'\mathrm{e}^{\sigma (r-4\nu)}p^r\phi\|
+\tfrac{1}{2}\|(\chi_\nu''|\mathrm{d} r|^2+\chi'_\nu(\triangle r))
\mathrm{e}^{\sigma (r-4\nu)}\phi\|\bigr\}^2\\
&\phantom{\le {}}+\bigl\{\|2r\bar\chi_{2\nu}\chi_\nu\mathrm{e}^{\sigma (r-4\nu)}p^r \phi\|+\|\bar\chi_{2\nu}(2r|\mathrm{d} r|^2\chi_\nu'+2\sigma r\chi_\nu|\mathrm{d}r|^2
+\tfrac{1}{2}(\triangle r^2)\chi_\nu) \mathrm{e}^{\sigma (r-4\nu)} \phi\|\bigr\}^2\\
&\le C\nu^2\|\chi_{\nu/2}|p \phi|\|^2+C\nu^2\langle \sigma\rangle^2\|\phi\|^2,
\end{align*}
where we have used \eqref{eq:11.7.22.9.52}.
By using \eqref{eq:11.7.19.12.53} and (\ref{eq:11.7.22.9.5}) (both with $\sigma=0$)
we then conclude 
\begin{align*}
-2\mathop{\mathrm{Re}}{}\inp{R_\nu \mathrm{e}^{\sigma (r-4\nu)}A\chi_\nu
\mathrm{e}^{\sigma (r-4\nu)}}_{\phi}
\le C\nu^2\langle\sigma\rangle^2\|\phi\|^2.
\end{align*}
Next, we compute the third and fourth terms of (\ref{eq:11.8.1.8.16}).
Note that we can not differentiate $\rho_1$.
But by the support property of $\chi_\nu'$ (the one used before) the fourth term is estimated similarly to the third term of (\ref{eq:11.7.13.4.10}),
and we obtain
\begin{align*}
-\mathop{\mathrm{Im}}{}
\inp{\rho_1\chi_\nu\mathrm{e}^{2\sigma (r-4\nu)} R_\nu}_\phi\le C\langle\sigma\rangle^2\|\phi\|^2.
\end{align*}
We proceed for the third term of (\ref{eq:11.8.1.8.16}):
\begin{align*}
&-\mathop{\mathrm{Im}}{}
\inp{\rho_1\mathop{\mathrm{Re}}{}p^r}_{\phi_\sigma}\\
&=-\mathop{\mathrm{Im}}{}
\inp{\rho_1p^r}_{\phi_\sigma}
+\tfrac{1}{2}\inp{\rho_1(\triangle r)}_{\phi_\sigma}\\
&\le 
-\mathop{\mathrm{Im}}{}
\inp{\phi_\sigma, \rho_1\chi_\nu\mathrm{e}^{\sigma (r-4\nu)}p^r\phi}
+C\|\phi\|^2
+C\sigma\inp{|\rho_1|}_{\phi_\sigma}
+C\|\phi_\sigma\|^2\\
&\le C\bigl(\sup \chi_{\nu/2}|\rho_1|\bigr)
\langle \sigma\rangle^{-1}\|\chi_\nu\mathrm{e}^{\sigma (r-4\nu)}|p\phi|\|^2
+C\|\phi\|^2
+C\langle\sigma\rangle\inp{|\rho_1|}_{\phi_\sigma}
+C\|\phi_\sigma\|^2.
\end{align*}
We apply \eqref{eq:11.7.19.12.53} and (\ref{eq:11.7.22.9.5}) 
to  the first term on the right-hand side yielding 
\begin{align*}
-\mathop{\mathrm{Im}}{}
\inp{\rho_1\mathop{\mathrm{Re}}{}p^r}_{\phi_\sigma}
\le C\bigl(\sup \chi_{\nu/2}|\rho_1|\bigr)\langle \sigma\rangle\|\phi_\sigma\|^2
+C\langle\sigma\rangle\|\phi\|^2+C\|\phi_\sigma\|^2.
\end{align*}

We  summarize 
\begin{equation}\label{eq:18}
   \sigma^2\inp[\big]{r(\partial^r|\mathrm{d} r|^2)+{c'}|\mathrm{d} r|^2
-C\bigl(\sup\chi_{\nu/2}|\rho_1|\bigr)}_{\phi_\sigma}
  -C\langle\sigma\rangle\|\phi_\sigma\|^2\leq C\nu^2\langle\sigma\rangle^2\|\phi\|^2.
\end{equation} 
We shall apply \eqref{eq:18} to a fixed $\nu\ge 1$  chosen so large  that
the quantity
$r(\partial^r|\mathrm{d} r|^2)+c'|\mathrm{d} r|^2
-C\bigl(\sup\chi_{\nu/2}|\rho_1|\bigr)$ is greater than some positive constant on 
$\mathop{\mathrm{supp}}\chi_\nu$. Note that this in turn is doable
since we have assumed \eqref{eq:2'}. 

Now assume $\chi_{5\nu}\phi\not\equiv 0$.
After division by $\langle\sigma\rangle^2$ on both sides of \eqref{eq:18} the left-hand side grows exponentially as $\sigma\to\infty$
whereas the right-hand side is bounded, 
and hence we obtain a contradiction.
Thus $\chi_{5\nu}\phi\equiv 0$, and then by
Condition~\ref{cond:11.7.9.0.24} we conclude that $\phi(x)= 0$ in $M$. 
\end{proof}

\section{Auxiliary operators}\label{sec:auxiliary-operators}
In this section we give the proof of  Proposition~\ref{prop:absence-eigenvalues-1b}.
We introduce  regularized weights
$$
\tk(r) = r(1+\tfrac{r}{m})^{-1},\quad m\geq 1,
$$
and denote the derivatives in $r$ by $\theta^{(k)}_m(r)$, e.g.,
$$
\tkp(r) = \theta_m^{(1)}(r)= (1+\tfrac{r}{m})^{-2}.
$$
We introduce furthermore
$$
\Tk(r) = \Theta_m^{\sigma,\delta}(r) = \sigma r +\delta
\tk(r),\quad \sigma,\delta\geq 0,
$$
and denote the derivatives by $\Theta_m^{(k)}(r)$ as above.
Now we define some observables:
\begin{align*}
B&=\i[H_0,r]=\tfrac{1}{2}(p^r+(p^r)^*)
=p^r+\tfrac{1}{2\mathrm{i}}(\triangle r),\\
B_m &=\mathrm{i}[H_0,\Theta_m]
=\tfrac1{2}\left(\Theta_m'p^r+ (p^r)^*\Theta_m'\right)
= \Theta'_mp^r
+\tfrac{1}{2\mathrm{i}}\{(\triangle r)\Theta_m'+|\mathrm{d} r|^2\Theta_m''\},\\
R_\nu&=\mathrm{i}[H_0,\chi_{\nu}]
=\tfrac1{2}(\chi_\nu'p^r+ (p^r)^*\chi_\nu'),\quad \nu \geq 1.
\end{align*}
Then we have the properties:
\begin{subequations}
\begin{align}
\label{eq:11.7.14.19.33}
A&=2Br-\tfrac{1}{\mathrm{i}}|\mathrm{d} r|^2
=2rB+\tfrac{1}{\mathrm{i}}|\mathrm{d} r|^2\\
\label{eq:12}
B_m
&= B\Tkp -\tfrac{1}{2\mathrm{i}}|\mathrm{d} r|^2\Theta_m''
= \Tkp B +\tfrac{1}{2\mathrm{i}}|\mathrm{d} r|^2\Theta_m'',\\
\begin{split}
(B_m)^2&=B(\Tkp)^2B 
-\tfrac{1}{2}(\partial^r|\mathrm{d} r|^2)\Theta_m'\Theta_m''
-\tfrac{1}{2}|\mathrm{d} r|^4\Theta_m'\Theta_m'''
-\tfrac{1}{4}|\mathrm{d} r|^4(\Theta_m'')^2\\
&\le B(\Tkp)^2B +C\delta(\sigma+\delta),
\end{split}\label{eq:11}
\end{align}
\end{subequations}
where the last inequality is for large $r$. We  set for $\nu'\ge 2\nu$ and $\psi\in C_{\mathrm{c}}^\infty(M)$
\begin{align*}
\psi_m=\psi_{m,\nu,\nu'}=\chi_{\nu,\nu'}\mathrm{e}^{\Theta_m}\psi;\quad
\chi_{\nu,\nu'}=\chi_{\nu}\bar \chi_{\nu'},\quad \bar \chi_{\nu'}=1-\chi_{\nu'},
\end{align*} not to be mixed up with $\psi_n$ in
Lemma~\ref{lem:11.7.19.22.23}. We recall the notation
\eqref{eq:Ham_sigma}. A computation shows,
 cf. \eqref{eq:19}, that 
\begin{align}
\begin{split}
&\i(H_\sigma-E)  \psi_m\\
&=\i \chi_{\nu,\nu'}\e^\Tk (H-E)\psi
+\left\{B_m -\tfrac{1}{2\mathrm{i}}((\Theta_m')^2-\sigma^2)|\mathrm{d} r|^2\right\}\psi_m
+\mathrm{e}^{\Theta_m}(R_\nu-R_{\nu'})\psi. 
\end{split}\label{eq:13}
\end{align}

\begin{lemma}\label{lemm:absence-eigenvalues-2}
Let $\sigma_0 \geq 0$ be fixed.
\begin{enumerate}[\normalfont (i)]
\item\label{item:3} 
Let $\epsilon>0$.
Then there exists $C>0$ such that, if $\nu\ge 1$
is large, for any $m\ge 1$, $0\le \delta\le 1$ and $0\le \sigma\le \sigma_0$,
as quadratic forms on $C_{\mathrm{c}}^\infty(M)$, 
\begin{align*}
\chi_\nu\mathop{\mathrm{Re}}{} (A B_m)\chi_\nu
&\ge 2 \chi_\nu B r\Theta_m' B\chi_\nu-(\epsilon+C\delta )\chi_\nu^2.
\end{align*}

\item\label{item:4}  
Let $\epsilon'>0$. Then there exists $C>0$ such that, 
if $\nu\ge 1$ is large, 
for any $\nu'\ge 2\nu$, $m\ge 1$, $0\le \delta\le 1$, $0\le \sigma\le \sigma_0$,
$E\in\R$ and $\psi\in C_{\mathrm{c}}^\infty(M)$
\begin{align*}
&   \|(H_\sigma-E)\psi_m\|^2 \\
&\leq  5\|\chi_{\nu,\nu'}\e^{\Theta_m}(H-E)\psi\|^2
+ \epsilon'\inp{ B r\Tkp B }_{\psi_m}+ C\delta \|\psi_m\|^2\\ 
&\phantom{\le {}}
+ C_\nu(\|\chi_{\nu/2}\psi\|^2 + \|\chi_{\nu/2}p \psi\|^2 )
+ C({\nu'})^{-2}(\|\chi_{\nu, 2\nu'}\mathrm{e}^{\Theta_m}\psi\|^2 
+ \|\chi_{\nu, 2\nu'}\mathrm{e}^{\Theta_m}p \psi\|^2 ).
\end{align*}

\end{enumerate}
\end{lemma}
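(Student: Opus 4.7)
The plan is to prove (i) and (ii) essentially independently, exploiting the algebraic identities \eqref{eq:11.7.14.19.33}, \eqref{eq:12}, and \eqref{eq:11} together with the conjugation identity \eqref{eq:13}.

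For (i), I would compute $\Re(AB_m)=\tfrac12(AB_m+B_mA)$ by substituting the appropriate forms of $A$ from \eqref{eq:11.7.14.19.33} and of $B_m$ from \eqref{eq:12}. The leading contribution is exactly $2Br\Theta_m'B$; the remainder splits into (a) purely multiplicative terms involving $|\d r|^4\Theta_m''$, $\Theta_m'\partial^r|\d r|^2$, and similar expressions, which are pointwise of order $\delta$ thanks to $|\Theta_m''|\leq 2\delta/m$ and the vanishing of $\partial^r|\d r|^2$ at infinity from Condition~\ref{cond:diffeo}(2); and (b) first-order-in-$B$ cross terms such as $\i|\d r|^2\Theta_m'B$, which upon symmetrisation collapse to further multiplication operators via the identity $2\Re(\i f B)=\i[f,B]=-\partial^r f$ for a real multiplication function $f$. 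Collecting both types yields a remainder bounded by $(\epsilon+C\delta)\chi_\nu^2$ once $\nu$ is sufficiently large. The main obstacle here is careful bookkeeping of adjoints and signs while keeping $2Br\Theta_m'B$ intact as the main positive term.

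For (ii), I would take the $L^2$-norm squared of \eqref{eq:13}. Labelling the three summands on the right-hand side $I,II,III$, Young's inequality gives $\|(H_\sigma-E)\psi_m\|^2\leq 5\|I\|^2+\tfrac{5}{4}\|II+III\|^2$, producing the factor $5$. For $\|II\|^2$, the multiplicative piece $(\Theta_m')^2-\sigma^2=2\sigma\delta\theta_m'+\delta^2(\theta_m')^2$ is pointwise $O(\delta)$, contributing $C\delta\|\psi_m\|^2$, while \eqref{eq:11} yields
$\|B_m\psi_m\|^2=\inp{B_m^2}_{\psi_m}\leq \inp{B(\Theta_m')^2B}_{\psi_m}+C\delta(\sigma_0+1)\|\psi_m\|^2.$
The key observation is that $\psi_m$ is supported where $r\geq\nu$ while $\Theta_m'\leq \sigma_0+1$ uniformly, so pointwise on $\supp\psi_m$ one has $(\Theta_m')^2\leq\tfrac{\sigma_0+1}{\nu}r\Theta_m'$; since $B$ is first-order, $B\psi_m$ has the same support, and therefore $\inp{B(\Theta_m')^2B}_{\psi_m}\leq\epsilon'\inp{Br\Theta_m'B}_{\psi_m}$ once $\nu$ is chosen large depending on $\epsilon'$ and $\sigma_0$.

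For $\|III\|^2\leq 2\|\e^{\Theta_m}R_\nu\psi\|^2+2\|\e^{\Theta_m}R_{\nu'}\psi\|^2$ I exploit the support properties of $\chi_\nu'$ and $\chi_{\nu'}'$. On $\supp\chi_\nu'\subset[\nu,2\nu]$ the weight $\e^{\Theta_m}$ is bounded by a constant depending only on $\nu$ and $\sigma_0$, and $\chi_{\nu/2}=1$, so expressing $R_\nu$ via \eqref{eq:1} gives the $C_\nu(\|\chi_{\nu/2}\psi\|^2+\|\chi_{\nu/2}p\psi\|^2)$ term. On $\supp\chi_{\nu'}'\subset[\nu',2\nu']$, the bounds $|\chi_{\nu'}'|\leq C/\nu'$ and $|\chi_{\nu'}''|\leq C/(\nu')^2$, combined with $\chi_{\nu,2\nu'}=1$ there, produce the $C(\nu')^{-2}$ factor in the final term. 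I expect the main obstacle to be the absorption bound $\inp{B(\Theta_m')^2B}_{\psi_m}\leq\epsilon'\inp{Br\Theta_m'B}_{\psi_m}$: it must be uniform in the three parameters $(\sigma,\delta,m)$, but this is ultimately tractable because $\Theta_m'$ is uniformly bounded over the prescribed ranges.
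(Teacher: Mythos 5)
Your proposal is correct and follows essentially the same route as the paper: for (i) a direct expansion of $\mathop{\mathrm{Re}}(AB_m)$ via \eqref{eq:11.7.14.19.33} and \eqref{eq:12} isolating $2Br\Theta_m'B$ and bounding the remainder; for (ii) the five-term decomposition of \eqref{eq:13}, the quadratic-form bound \eqref{eq:11}, the pointwise absorption $(\Theta_m')^2\le \tfrac{\sigma_0+1}{\nu}r\Theta_m'$ on $\{r\ge\nu\}$, and the support properties of $\chi_\nu'$, $\chi_{\nu'}'$. One minor slip: in (i) the term $\Theta_m'\,\partial^r|\d r|^2$ is not of order $\delta$; it contributes the $\epsilon$ part of the remainder (via the vanishing of $\partial^r|\d r|^2$ at infinity for large $\nu$), while the $\Theta_m''$ and $\Theta_m'''$ contributions supply the $C\delta$ part.
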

\begin{proof}
\textit{(i)}\ 
By (\ref{eq:11.7.14.19.33}) and (\ref{eq:12})
\begin{align*}
\mathop{\mathrm{Re}}{}(AB_m)
&=\tfrac{1}{2}(2Br - \tfrac{1}{\mathrm{i}}|\mathrm{d} r|^2)
(\Theta_m'B+\tfrac{1}{2\mathrm{i}}|\mathrm{d} r|^2\Theta_m'')
+\mathrm{h.c.}\\
&=B r\Theta_m' B
+\tfrac{1}{2\mathrm{i}}Br|\mathrm{d} r|^2\Theta_m''
- \tfrac{1}{2\mathrm{i}}|\mathrm{d} r|^2\Theta_m'B
+ \tfrac{1}{4}|\mathrm{d} r|^4\Theta_m''
+\mathrm{h.c.}\\
&=2B r\Theta_m' B
-\tfrac{1}{2}
\bigl\{(\partial^r |\mathrm{d} r|^2)(\Theta_m'+r\Theta_m'')
+|\mathrm{d} r|^4 (\Theta_m''+r\Theta_m''')\bigr\}.
\end{align*}
Then by \eqref{eq:3} and (\ref{eq:11.7.15.8.1}) the assertion follows.

\smallskip
\noindent
\textit{(ii)}\ 
By (\ref{eq:13}), (\ref{eq:11}), \eqref{eq:3}  and \eqref{eq:11.7.22.9.52}
\begin{align*}
&\|(H_\sigma-E)\psi_m\|^2\\
& \le 5\|\chi_{\nu,\nu'}\mathrm{e}^{\Theta_m}(H-E)\psi\|^2
+5\inp{(B_m)^2}_{\psi_m}
+\tfrac{5}{4}\|((\Theta_m')^2-\sigma^2)|\mathrm{d} r|^2\psi_m\|^2\\
&\phantom{\le {}}+5\|\mathrm{e}^{\Theta_m}R_\nu\psi\|^2
+5\|\mathrm{e}^{\Theta_m}R_{\nu'}\psi\|^2\\
& \le 5\|\chi_{\nu,\nu'}\mathrm{e}^{\Theta_m}(H-E)\psi\|^2
+5\inp{B(\Theta_m')^2B}_{\psi_m}
+C\delta\|\psi_m\|^2\\
&\phantom{\le {}}+C_\nu(\|\chi_{\nu/2}\psi\|^2+\|\chi_{\nu/2}p \psi\|^2)
+ C({\nu'})^{-2}(\|\chi_{\nu, 2\nu'}\mathrm{e}^{\Theta_m}\psi\|^2 
+ \|\chi_{\nu, 2\nu'}\mathrm{e}^{\Theta_m}p \psi\|^2 ).
\end{align*}
Now choose $\nu\ge 1$ large enough so that $5\Theta'_m\le 5(\sigma_0 +1)\le \epsilon' r$ on 
$\mathop{\mathrm{supp}}\chi_\nu$, and  we are done.
\end{proof}

\begin{proof}[Proof of Proposition~\ref{prop:absence-eigenvalues-1b}]
We let $E$ and $\phi$ be as in the proposition.
Set 
\begin{align*}
\sigma_0=\sup {}\{\sigma\ge 0|\ \mathrm{e}^{\sigma r}\phi\in{\mathcal H}\},
\end{align*}
and assume $\sigma_0<\infty$.
If $\sigma_0>0$ we choose $\sigma\in [0,\sigma_0)$ and a small $\delta>0$ such that 
$\sigma+\delta>\sigma_0$. If $\sigma_0=0$ we set $\sigma=0$ and choose a small $\delta>0$.
 These numbers will be determined more precisely in the following
 arguments. In any case we  have
 $\mathrm{e}^{\sigma r}\phi\in{\mathcal H}$.
We indicate  below the dependence of constants using  subscripts.

Due to  Corollary~\ref{cor:10.10.13.15.00}, for  any $\psi\in C_{\mathrm{c}}^\infty(M)$
\begin{align}
\begin{split}
\|\psi_m\|^2
\le \gamma^{-1}\inp{\mathrm{i}[H_{\sigma},A]}_{\psi_m}
+C_0\|(H_\sigma-E)\psi_m\|^2;\;C_0=C/\gamma.
\end{split}
\label{eq:11.7.23.6.53}
\end{align}
We estimate the right-hand side using Lemma~\ref{lemm:absence-eigenvalues-2}.
For the first term of (\ref{eq:11.7.23.6.53})
 we use (\ref{eq:13}) and 
Lemma~\ref{lemm:absence-eigenvalues-2}\ref{item:3} with
$\epsilon=\tfrac{\gamma}{3}$ estimating
\begin{align*}
&\inp{\mathrm{i}[H_{\sigma},A]}_{\psi_m}\\
&=-\inp{\mathrm{i}(H_\sigma-E)\psi_m,A\psi_m}+\mathrm{h.c.}\\
&=-\inp{\mathrm{i}\chi_{\nu,\nu'}\mathrm{e}^{\Theta_m}(H-E)\psi,A\psi_m}
-\inp{B_m\psi_m,A\psi_m}
+\inp{\tfrac{1}{2\mathrm{i}}|\mathrm{d}r|^2((\Theta_m')^2-\sigma^2)\psi_m,A\psi_m}\\
&\phantom{={}}
-\inp{\mathrm{e}^{\Theta_m}(R_\nu-R_{\nu'})\psi,A\psi_m}+\mathrm{h.c.}\\
&\le 2\|\chi_{\nu,\nu'}\mathrm{e}^{\Theta_m}(H-E)\psi\|\|A\psi_m\|
-2\mathop{\mathrm{Re}}{}\inp{AB_m}_{\psi_m}
-\inp{(r\partial^r|\mathrm{d} r|^2)((\Theta_m')^2-\sigma^2)}_{\psi_m}\\
&\phantom{={}}
-\inp{2r|\mathrm{d} r|^4\Theta_m'\Theta_m''}_{\psi_m}
+C_{\nu}(\|\chi_{\nu/2}\psi\|^2+\|\chi_{\nu/2}p\psi\|^2)\\
&\phantom{={}}+ {C_m}(\|\sqrt{r/\nu'}\chi_{\nu,2\nu'}\mathrm{e}^{\sigma r}\psi\|^2
 + \|\sqrt{r/\nu'}\chi_{\nu,2\nu'}\mathrm{e}^{\sigma r}p \psi\|^2 )\\
&\le C(\nu')^2\|\chi_{\nu,\nu'}\mathrm{e}^{\Theta_m}(H-E)\psi\|^2
-4\inp{Br\Theta_m'B}_{\psi_m}
+(\tfrac{2\gamma}{3}+C_1\delta)\|\psi_m\|^2
\\
&\phantom{={}}
+C_{\nu}(\|\chi_{\nu/2}\psi\|^2+\|\chi_{\nu/2}p\psi\|^2)
+ {C_m}(\|\sqrt{r/\nu'}\chi_{\nu,2\nu'}\mathrm{e}^{\sigma r}\psi\|^2
 + \|\sqrt{r/\nu'}\chi_{\nu,2\nu'}\mathrm{e}^{\sigma r}p \psi\|^2 ),
\end{align*} where we used that $r/\nu'\leq 2 \sqrt{r/\nu'}$ on $\supp
\chi_{\nu,2\nu'}$ to estimate $(\nu')^{-2}\|A\psi_m\|^2$.


On the other hand, for the second term of (\ref{eq:11.7.23.6.53}),
let us choose $\epsilon'=\tfrac{4}{\gamma C_0}$ in Lemma~\ref{lemm:absence-eigenvalues-2}\ref{item:4}.
Then (\ref{eq:11.7.23.6.53}) is estimated as 
\begin{align*}
&\|\psi_m\|^2
 \le C(\nu')^2\|\chi_{\nu,\nu'}\mathrm{e}^{\Theta_m}(H-E)\psi\|^2
+\Bigl(\tfrac{2}{3}+(\tfrac{C_1}{\gamma}+C_2)\delta\Bigr)\|\psi_m\|^2
\\
&\quad 
+C_{\nu}(\|\chi_{\nu/2}\psi\|^2+\|\chi_{\nu/2}p\psi\|^2)
+ {C_m}(\|\sqrt{r/\nu'}\chi_{\nu,2\nu'}\mathrm{e}^{\sigma r}\psi\|^2
 + \|\sqrt{r/\nu'}\chi_{\nu,2\nu'}\mathrm{e}^{\sigma r}p \psi\|^2 ).
\end{align*}

Now fix $\nu\ge 1$ sufficiently large (so that the above estimates  hold),
and let $\sigma$ and $\delta$ be such that 
$\tfrac{2}{3}+(\tfrac{C_1}{\gamma}+C_2)\delta\leq \tfrac{3}{4}$
and $\sigma+\delta>\sigma_0$. Then  
\begin{align}
\begin{split}
\tfrac 14\|\psi_m\|^2
&\le C(\nu')^2\|\chi_{\nu,\nu'}\mathrm{e}^{\Theta_m}(H-E)\psi\|^2
+C_{\nu}(\|\chi_{\nu/2}\psi\|^2+\|\chi_{\nu/2}p\psi\|^2)\\
&\phantom{={}}
+ {C_m}(\|\sqrt{r/\nu'}\chi_{\nu,2\nu'}\mathrm{e}^{\sigma r}\psi\|^2
 + \|\sqrt{r/\nu'}\chi_{\nu,2\nu'}\mathrm{e}^{\sigma r}p \psi\|^2 ).
\end{split}
\label{eq:11.7.16.3.22}
\end{align}
By Lemma~\ref{lem:11.7.19.22.23} we can replace $\psi$ of
(\ref{eq:11.7.16.3.22}) by $\phi$. This makes the first term on the
right-hand side 
disappear. Next let
$\nu'\to\infty$ invoking Lebesgue's dominated convergence
theorem. Note that the
third  term 
disappears, and consequently we are left with the bound
\begin{align}
\|\chi_\nu\mathrm{e}^{\Theta_m}\phi\|^2
 &\le 
4C_\nu(\|\chi_{\nu/2}\phi\|^2+\|\chi_{\nu/2}p\phi\|^2).
\label{eq:11.7.16.3.43}
\end{align}
 By letting $m\to\infty$ in \eqref{eq:11.7.16.3.43} invoking Lebesgue's monotone  convergence
theorem we conclude that
$\chi_\nu\mathrm{e}^{(\sigma+\delta)r}\phi\in{\mathcal H}$. This is a 
 contradiction since $\sigma+\delta>\sigma_0$.
\end{proof}


\begin{thebibliography}{DoGa}


\bibitem[Ch]{Ch} P.R. Chernoff, 
\emph{Essential self-adjointness of powers of generators of hyperbolic equations},
J. Funct. Anal.  {\bf 12} (1973), 401--414. 

\bibitem[DeG{\'e}]{DG}
J. Derezi{\'n}ski  and C. G{\'e}rard, \emph{Scattering theory of
  classical and quantum {$N$}-particle systems}, Texts and Monographs in
  Physics,  Berlin, Springer 1997.

\bibitem[Do]{Do}
H. Donnelly, \emph{Spectrum of the Laplacian on asymptotically Euclidean spaces},
Michigan Math.\ J.\ {\bf 46} no. 1 (1999), 101--111.

\bibitem[DoGa]{DoGa}
H. Donnelly, and N. Garofalo,
\emph{Schr\"odinger operators on manifolds, essential self-adjointness, and absence of eigenvalues},
J. Geom.\ Anal.\ {\bf 7}  no.  2 (1997), 241--257. 

  \bibitem[FH]{FH}  R. Froese and I. Herbst, \emph{Exponential bounds
and absence of positive eigenvalues for $N$-body Schr{\"o}dinger
operators},   Comm. Math. Phys. {\bf 87} no. 3 (1982/83),  
429--447.


\bibitem[FHH2O]{FHH2O} R.~Froese, I.~Herbst, M.~Hoffmann-Ostenhof and
T.~Hoffmann-Ostenhof,
\emph{On the absence of positive eigenvalues for one-body
  Schr\"odinger  operators},
 J.~Analyse Math. \textbf{41} (1982), 272--284.


\bibitem[IS]{IS} K. Ito and E. Skibsted, \emph{Scattering theory for
    Riemannian Laplacians}, to appear. 

\bibitem[JK]{JK}
D. Jerison and C.E. Kenig, \emph{Unique continuation and absence of
  positive eigenvalues for {S}chr\"odinger operators}, Ann. of Math. (2)
  \textbf{121}  no.~3 (1985), 463--494. With an appendix by E. M. Stein.


\bibitem[Ku1]{Ku} H. Kumura, \emph{The radial curvature of an end that
    makes eigenvalues vanish in the essential spectrum. I.},
Math.\ Ann.\ {\bf 346}  no.\ 4 (2010), 795--828.
 
\bibitem[Ku2]{Ku2} H. Kumura, \emph{The radial curvature of an end
    that makes eigenvalues vanish in the essential spectrum. II.},
  preprint 2009.



\bibitem[Me]{Me} R. Melrose,
\emph{Spectral and scattering theory for the Laplacian on
  asymptotically Euclidean spaces},  85--130,  Spectral and scattering theory,
Lecture Notes in Pure and Appl. Math.,  \textbf{161}, Dekker, New
York, 1994.  


\bibitem[MZ]{MZ} R. Melrose, M. Zworski, \textit{
Scattering metrics and geodesic flow at infinity}, 
Invent. Math. \textbf{124} no. 1-3  (1996), 389--436. 

\bibitem[MS]{MS}
J.~S. M{\o}ller and E. Skibsted, \emph{Spectral theory of time-periodic
  many-body systems}, Advances in Math.  \textbf{188}
 (2004), 137--221.

\bibitem[RS]{RS}
M.~Reed and B.~Simon, \emph{Methods of modern mathematical physics {I}-{IV}},
  New York, Academic Press 1972-78.

\bibitem[Wo]{Wu}
T. Wolff,
\emph{Recent work on sharp estimates in second-order elliptic unique continuation problems},
J. Geom.\ Anal.\ {\bf 3} no. 6 (1993), 621--650.
\end{thebibliography}
\end{document}